\documentclass{article}[11pt]
\usepackage{amssymb, amsthm, amsmath}
\usepackage[a4paper]{geometry}
\usepackage{amssymb,latexsym,amscd}
\usepackage{epsfig}
\usepackage[active]{srcltx}
\usepackage{psfrag}
\usepackage{graphics,graphicx}
\usepackage{pstricks,pst-node,pst-tree}
\usepackage{braket}
\usepackage{color}
\usepackage{natbib}
\usepackage{bm}


\definecolor{r}{rgb}{1,0,0}
\definecolor{b}{rgb}{0,0,1}
\definecolor{g}{rgb}{0,1,0}



 \newcommand{\tr}[1]{}
 \newcommand{\tb}[1]{#1}

\newcommand{\fra}[2]{\textstyle{\frac{#1}{#2}}}
\newcommand{\beqn}{\begin{eqnarray}\begin{aligned}}
\newcommand{\eqn}{\end{aligned}\end{eqnarray}}
\newcommand{\lr}[1]{{\langle #1 \rangle}}
\newcommand{\lrr}[1]{{\langle #1 \rangle}_{\mathbb{R}}}
\newcommand{\lrc}[1]{{\langle #1 \rangle}_{\mathbb{C}}}
\newcommand{\s}[1]{\{ #1 \}}
\newcommand{\lie}[2]{\left[#1,#2\right]}
\newcommand{\id}{\mathbf{1}}

\newcommand{\CC}{\mathbb{C}}
\newcommand{\LL}{\mathfrak{L}}
\newcommand{\SG}{\mathfrak{S}}

\theoremstyle{plain}
\newtheorem{thm}{Theorem}[section]
\newtheorem{prop}{Proposition}[section]

\newtheorem{res}{Result}
\newtheorem{rem}{Remark}

\newtheorem{lem}[thm]{Lemma}

\theoremstyle{definition}
\newtheorem{mydef}[thm]{Definition}

\newif\ifprivate
\def\xbar{\vskip0.09in\hrule\vskip0.06in}
\def\private#1{\ifprivate \xbar {\em #1} \xbar
\else \fi}

\def\???{\ifprivate {\bf {???}} \marginpar{{\Huge {\bf ?}}}
\else \fi}



\title{Lie Markov Models}
\author{Jeremy Sumner$^{1,*}$, Jes\'us Fern\'andez-S\'anchez$^{2}$ and Peter Jarvis$^{1,\dagger}$}


\begin{document}
\maketitle

\begin{abstract}
Recent work has discussed the importance of multiplicative closure for the Markov models used in phylogenetics. 
For continuous-time Markov chains, a sufficient condition for multiplicative closure of a model class is ensured by demanding that the set of rate-matrices belonging to the model class form a Lie algebra. 
It is the case that some well-known Markov models \emph{do} form Lie algebras and we refer to such models as ``Lie Markov models''.
However it is also the case that some other well-known Markov models unequivocally \emph{do not} form Lie algebras (GTR being the most conspicuous example).

In this paper, we will discuss how to generate Lie Markov models by demanding that the models have certain symmetries under nucleotide permutations.
We show that the Lie Markov models include, and hence provide a unifying concept for, ``group-based'' and ``equivariant'' models. 
For each of two, three and four character states, the full list of Lie Markov models with maximal symmetry is presented and shown to include interesting examples that are neither group-based nor equivariant. 
We also argue that our scheme is pleasing in the context of applied phylogenetics, as, for a given symmetry of nucleotide substitution, it provides a natural hierarchy of models with increasing number of parameters. 
\end{abstract}

\vfill
\hrule\mbox{}\\
\thanks{\footnotesize{
\noindent
$^1$School of Mathematics and Physics, University of Tasmania, Australia\\
$^2$Departament de Matem\`atica Aplicada, Universitat Polit\`ecnica de Catalunya, Spain\\
$^{\dagger}$Alexander von Humboldt Fellow, $^{\ast}$ARC Research Fellow\\
\textit{keywords:} phylogenetics, Lie algebras, Lie groups, representation theory, symmetry, Markov chains\\
\textit{Corresponding author:} Jeremy Sumner, jsumner@utas.edu.au
}
}

\newpage

\section{Introduction}

Continuous-time Markov chains are fundamental to the implementation of, and philosophy behind, many phylogenetic methods.
Likelihood and Bayesian phylogenetic methods usually proceed by attempting to fit a single ``rate-matrix'' globally across a proposed evolutionary tree history (see, for example, Chapters~2 and 3 of \citet{gascuel2005}).
These rate-matrices are chosen from some restricted class or ``model'' that is defined by a certain set of constraints on the elements of a generic rate-matrix.
These constraints define a set of free parameters that usually correspond to unknown evolutionary quantities such as base composition, mutation rates and the timing of speciation events (these last two are often by necessity confounded together simply as ``edge lengths'').
Even in phylogenetic distance methods, it is usually the case that the theoretical justification of a given distance estimator is taken from a continuous-time Markov model (for example the general Markov model for the ``log-det'' \citep{steel1994} distance or the HKY distance taken from its corresponding model \citep{felsenstein2004}).

A \emph{homogeneous} Markov chain satisfies the condition that the probability transition rates are constant in time.
In the phylogenetic context this means that the rates are unchanged throughout evolutionary history.
Of course, this is used as an approximation to biological reality where it is well documented that transition rates are not only time-dependent \citep{ho2005,ho2007}, but also vary across the different lineages of the evolutionary tree \citep{lockhart1998}.
Methods to cope with these issues have been explored by various authors: \citet{tuffley1997a} proposed the ``covarion'' model where a switching process allows sites to alternate between ``on'' and ``off'' states. 
\citet{drummond2006}, proposed a method that introduces an overall scaling factor for the transition rates that is sampled randomly (at branching events, for example), and the methods presented in \citet{whelan2008} are more general still with a switching process that allows for alteration of individual rates.
The simulation package discussed in \citet{fletcher2009} provides further evidence that these issues are of ongoing importance to phylogenetic analysis.

Our philosophy is to remain agnostic as to whether evolutionary rates have changed in the past or, indeed, whether it is possible to statistically detect this change via analysis of present day molecular data.
We follow an approach that allows for the biological possibility that there is likely to have been a smooth (or even abrupt) change of each \emph{individual}  transition rate independently occurring across the evolutionary tree (and not necessarily restricted to branching events).
This discussion leads naturally to confronting the possibility (at least theoretically) that the phylogenetic process is not homogeneous and is more accurately modelled as an \emph{inhomogeneous} continuous-time Markov chain; where the rate-matrix is far from constant and ultimately is allowed to vary, smoothly or otherwise, as a function of edge length parameters of the evolutionary tree.

Of course, given the bias/variance tradeoff of statistical analysis \citep{burnham2002}, modelling phylogenetic evolution as a inhomogeneous process is statistically implausible in practice (we would effectively be replacing a small number of parameters by an infinite continuum).
Indeed this is where the methods discussed in \citet{drummond2006}, where rates may change but only at branching events, can be seen as somewhat of an intelligent compromise between a (statistically tractable) homogeneous model and a (biologically realistic) inhomogeneous model.
Another approach would be to abandon the continuous-time hypothesis and work with discrete Markov chains (or equivalently ``algebraic'' models \citep{pachter2005}).
However this approach introduces many free parameters and suffers from a lack of interpretation, as it is unclear what the free algebraic parameters mean in biological terms (such as divergence times and molecular rates), except with reference to the corresponding continuous-time approach.

\private{OLD Jesus:I would remove this remark on algebraic models. The algebraic parameters are the probabilities of nucleotide substitution, so they make sense independently of any reference to time or taxes.  On the other hand, the only difference between ``algebraic'' models and the usual evolutionary models is what one considers to be the parameters of the model.}
\private{Jeremy: I have retained (but modified) the remark as I think it is exactly because algebraic models are defined independently of reference to time (or molecular rates) that means that algebraic models are difficult to interpret in biological terms (i.e. divergence times and molecular rates!) Sure, the algebraic models give probabilities of nucleotide substitution, but probabilities that are conditional on what???
I realize that you might be uncomfortable about this comment, as it may seem a little controversial. My feeling is that it doesn't hurt to stir up a little controversy!}
\private{Jesus: A little controversy is okay. Let me think about it. }
\private{Jeremy: What is your opinion about this controversy?}
An available resolution of these issues is to observe that it is possible to continue to model phylogenetic processes as being homogeneous, but interpret the transition rates that are fitted globally across the tree (or at least non-locally) as a kind of ``average'' of the true inhomogeneous process.
It is this perspective that we take in this work and it leads directly to the concept of \emph{multiplicative closure} for continuous-time Markov chains.
It will be shown that models that are multiplicatively closed have the property that, even in their inhomogeneous formulation, it is possible to interpret their average behaviour as a homogeneous process.
It is then the purpose of this article to discuss sufficient conditions for multiplicative closure of continuous-time Markov chains.
In order to generate particular examples of closed models, we exploit symmetry properties of DNA substitution rates to present a scheme that creates a hierarchy of closed Markov models based on the number of free parameters available.

In \S\ref{sec2} we give basic definitions of multiplicative closure and Lie Markov models. 
To achieve this we review the required Lie theory, and we discuss the Lie algebra of the general Markov model.
As an example to motivate the general procedure, in \S\ref{sec3} we specialize to the case of binary Markov chains and give a complete description of Lie Markov models in this case.
In \S\ref{sec4} we discuss the symmetry properties of Markov models and explain how symmetry can be used to assist in the search for Lie Markov models.
Here we also prove that \emph{equivariant} (see \cite{draisma2008,casanellas2010}) and \emph{group-based} models are examples of Lie Markov models.
In \S\ref{sec5} we give a general scheme for generating a full list of Lie Markov models with a given symmetry property,
In \S\ref{sec6} we explicitly give four state Lie Markov models with maximal symmetry.
Finally, \S\ref{conc} discusses implications and possibilities for future work.

\section{Lie algebras and closure of Markov models}\label{sec2}
For algebraic simplicity we work over the complex field $\mathbb{C}$, and refer to a matrix as ``Markov'' if it has unit column sums.
Later we will discuss how our discussion specializes to the stochastic case where the entries must be real and lie in the range $\left[0,1\right]$.
Rather than work directly with the general Markov model, we will also consider only Markov matrices that have non-zero determinant. 
Although this need not be the case for a general Markov matrix, it is not too stringent a condition as (i) the set of Markov matrices with zero determinant is of measure zero in the set of Markov matrices (this is because they are defined by the vanishing of a single polynomial function and hence lie in an ambient space of dimension one less than the set of generic Markov matrices), (ii) Markov matrices that arise from a continuous-time formulation have non-zero determinant (as we will see shortly).
In any case, in the conclusions we will argue that understanding Markov matrices with zero determinant becomes easier once we understand how the rest can be categorized.

Let the \emph{general Markov model} $\mathfrak{M}_{GMM}$ be the set of $n\times n$ matrices with column sum 1: 
\tb{\[\mathfrak{M}_{GMM}:=\left\{M\in \mathbb{M}_n(\mathbb{C}) : \bm{\theta}^T M =\bm{\theta}^T \right\},\]}
where $\bm{\theta}$  is the column $n$-vector with all its entries equal to $1$, ie. $\bm{\theta}^T=(1,1,\ldots,1)$.
Specializing further, consider the subset of matrices in $\mathfrak{M}_{GMM}$ with non-zero determinant:
\tb{
\[GL_1(n,\mathbb{C}):=\left\{M\in \mathbb{M}_n(\mathbb{C}) : \tb{\bm{\theta}^T} M =\tb{\bm{\theta}^T} ,\det(M)\!\neq \!0\right\}.\]}
In turn, this set of matrices includes a subset of matrices that arise by taking the exponential of a rate-matrix\tb{; that is, the exponential of a matrix in} 
\tb{
\begin{eqnarray} \label{eq:generalratematrixmodel}
\mathfrak{L}_{GMM}:=\left\{ Q\in M_n(\mathbb{C}) : \tb{\bm{\theta}^T} Q =\bm{0}^T \right\}.
\end{eqnarray}} 
We will refer to $e^{\mathfrak{L}_{GMM}}\tb{:=\left\{e^Q : Q\in \mathfrak{L}_{GMM}\right\}}$ as ``the general rate-matrix model'' and below we will discuss matrix exponentials in more detail (particularly their importance to Lie theory).

As the inverse of a Markov matrix (if it exists) is also a Markov matrix it is clear that $GL_1(n,\mathbb{C})$ is actually a subgroup of the general linear group $GL(n,\mathbb{C})$, and it follows that $GL_1(n,\mathbb{C})$ and $e^{\mathfrak{L}_{GMM}}$ are actually \emph{Lie groups} (see \cite{stillwell2008} for the relevant technical definitions).
In fact we have the isomorphism $GL_1(n,\mathbb{C})\cong A(n\!-\!1,\mathbb{C})$ where $A(n\!-\!1,\mathbb{C})$ is the (complex)  \emph{affine group} (see for example \citet{baker2003}).
This observation allows the general methods of Lie theory to be applied to understanding continuous-time Markov models; see \citet{johnson1985} and \citet{mourad2004} for general results and discussion, and \citet{sumner2008} for applications to phylogenetics.

Summarizing, we have the following set inclusions
\beqn
e^{\mathfrak{L}_{GMM}}&\subset GL_1(n,\mathbb{C}) \subset \mathfrak{M}_{GMM},\nonumber
\eqn
and Lie group hierarchy
\beqn
e^{\mathfrak{L}_{GMM}}& < GL_1(n,\mathbb{C}) < GL(n,\mathbb{C}).\nonumber
\eqn

We define a \emph{Markov model} $\mathfrak{M}$ by taking $\mathfrak{M}\subseteq \mathfrak{M}_{GMM}$ as some well defined subset of the general Markov model.
Similarly, a \emph{rate-matrix model} $e^{\mathfrak{L}}$ is defined by taking $\mathfrak{L}\subseteq \mathfrak{L}_{GMM}$ as some well defined subset of rate-matrices drawn from the general rate-matrix model and taking the set of exponentials thereof (as in (\ref{eq:generalratematrixmodel})).
It follows immediately from these definitions that all rate-matrix models are Markov models.
In what follows we are primarily interested in the case that $\mathfrak{M}=e^{\mathfrak{L}}$, and in this case we will abuse our terminology and refer to $\mathfrak{L}$ as a ``model''.
\begin{mydef}
A Markov model $\mathfrak{M}$ is said to be \emph{multiplicatively closed} if and only if for all $M_1,M_2\in \mathfrak{M}$ we also have $M_1M_2\in \mathfrak{M}$.
\end{mydef}
Of course, recalling that matrix multiplication is associative, this is exactly the statement that $\mathfrak{M}$ forms a $\emph{semigroup}$ under matrix multiplication.

Presently we explore the conditions for which we can expect a rate-matrix model to be closed.
Consider an extension of standard phylogenetic models where each edge $e$ of the tree has its own rate-matrix $Q_e$ chosen from some model $\mathfrak{L}$.
We can envisage this process arising under a model where at each branching event a new set of transition rates is chosen (a generalization of the ideas given in \cite{drummond2006}).
Now, if we remove a single taxon from our tree there is a standard marginalization procedure that will give us a new tree with one fewer taxa.
However, on this new tree there will now be an edge $e_{ab}$ which is the join of the edges $e_a\!=\!(u_a,v_a)$ and $e_b\!=\!(u_b,v_b)$ with $u_b=v_a$ from the original tree, that is $e_{ab}\!=\!(u_a,v_b)$ (see Figure~\ref{fig:margTree} for an illustration of this).
Under the marginalization, the transition matrix for the edge $e_{ab}$ will then be the product of the matrices from edge $e_a$ and $e_b$:
\[M_{e_{ab}}=M_{e_b}M_{e_a}=e^{Q_b\tau_b}e^{Q_a \tau_a},\]
where $Q_a,Q_b\in\mathfrak{L}$ and $\tau_a,\tau_b$ are the corresponding edge lengths.
Now, for $e^{\mathfrak{L}}$ to be closed we require  $M_{e_{ab}}=e^{Q_{ab}(\tau_{a}+\tau_{b})}$, with $Q_{ab}\in\mathfrak{L}$.

\begin{figure}[thb]
\centering
\psfrag{a}{$e^{Q_a\tau_a}$}\psfrag{b}{$e^{Q_b\tau_b}$}
\psfrag{x}{$u_a$}\psfrag{y}{$v_a=u_b$}\psfrag{z}{$v_b$}
\psfrag{c}{$e^{Q_{ab}(\tau_a+\tau_b)}$}
\includegraphics[scale=0.7]{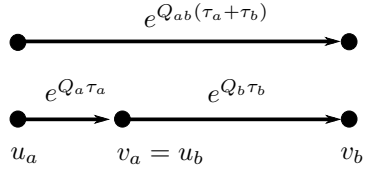}
\caption{$M_{e_{ab}}=e^{Q_{ab}(\tau_{a}+\tau_{b})}$, with $Q_{ab}\in\mathfrak{L}$.}
\label{fig:margTree}
\end{figure}

The question then naturally arises, is it obvious that this will be the case no matter how we define our model $\mathfrak{L}$?
To understand what could go wrong, consider two matrices $X,Y\in\mathbb{M}_n(\mathbb{C})$ and recall the classical Baker-Campbell-Hausdorff (BCH) formula \citep{campbell1897}:
\begin{eqnarray*}
 e^Xe^Y=\exp({X+Y+\fra{1}{2}\left[X,Y\right]+\fra{1}{12}\left[X,\left[X,Y\right]\right]+\ldots)},
\end{eqnarray*}
where $\left[X,Y\right]:=XY-YX$ is known as the \emph{commutator} (or \emph{Lie bracket}) of $X$ and $Y$, and the higher order terms are all given as further commutators of commutators of $X$ and $Y$ (see \cite{stillwell2008} for an elementary proof).
This formula generalizes the extremely well-known rule for the product of two exponentials, $e^xe^y=e^{x+y}$, from the case of commuting variables, $xy\!=\!yx$, to the more general case of non-commuting variables.
This is achieved by ``correcting'' for the non-commutivity of matrix multiplication with the addition of the further commutators.

Considering again the case at hand, if we replace $X$ and $Y$ with our rate-matrices $Q_a$ and $Q_b$, we see that a sufficient condition for a model $\mathfrak{L}$ to satisfy the required condition is for it to be a \emph{Lie algebra}:
\[\tb{\tau_a}Q_a+\tb{\tau_b}Q_b\text{ and }\left[Q_a,Q_b\right]\in\mathfrak{L},\]
for all $Q_a,Q_b\in \mathfrak{L}$ \tb{and $\tau_a,\tau_b$}.

\private{Jeremy: We had for all $\tau_a,\tau_b\in \mathbb{R}$ but I think it's best to sidestep the issue of what field we are talking about in these early comments.}

Suppose we have a model $\mathfrak{L}$ that forms a Lie algebra, and suppose that we choose a sequence of rate-matrices from it: $(Q_1,Q_2,\ldots,Q_m)$.
Taking parameters $\tau_1,\tau_2,\ldots,\tau_m$ with $t=\tau_1+\tau_2+\ldots +\tau_m$, we can  consider the inhomogeneous process given by 
\beqn
e^{Q_1\tau_1}e^{Q_2\tau_2}\ldots e^{Q_m\tau_m}.\nonumber
\eqn 
However, because $\mathfrak{L}$ forms a Lie algebra, we can conclude, via repeated application of the BCH formula, that there exists a rate-matrix $\widehat{Q}\in\mathfrak{L}$ that acts as a homogeneous average:
\beqn
M(t):=e^{\widehat{Q}t}=e^{Q_1\tau_1}e^{Q_2\tau_2}\ldots e^{Q_m\tau_m}.\nonumber
\eqn
Thus we see that if we restrict our attention to models that form Lie algebras, we are free to interpret fitting phylogenetic data as finding an average of the (possibly) true inhomogeneous process\footnote{Here the technical issue arises that the BCH formula does not guarantee that the series $X+Y+\lie{X}{Y}+\ldots $ will actually converge. Although this issue need not concern us in the present work because there will be \emph{some} radius of convergence for the series, we refer the concerned reader to \citet{blanes2004} for further analysis.}.
\private{Jeremy: I'm worried that this is not altogether clear. For example, can we actually be sure that$\widehat{Q}$ given by the BCH formula exists in the sense that the series converges? This issue seems to be discussed in \citet{blanes2004}}
\private{Jesus: I'll think about it and I'll try to get a copy of Blanes and Casas's paper. According to Stillwell, p.152, it seems there is no problem with the convergence in a neighbourhood of $\id$. On the other hand, in Wikipedia it's said that the map $exp: M_n(\CC) \rightarrow GL(n,\CC)$ is surjective. If this is true, there is some $\widehat{Q}$ such that $e^{Q_1\tau_1}e^{Q_2\tau_2}\ldots e^{Q_m\tau_m}=e^{\widehat{Q}}$. What else could $\widehat{Q}$ be but the matrix given by BCH-formula?}

\begin{figure}[bt]
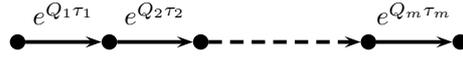

  \centering  
  \vspace{2em}
  $
  \psmatrix[colsep=1cm,rowsep=1cm,linewidth=.05cm]
	[mnode=dot,dotscale=1.2] & [mnode=dot,dotscale=1.2] &	[mnode=dot,dotscale=1.2] & & [mnode=dot,dotscale=1.2] & [mnode=dot,dotscale=1.2] \\
	\ncline{->}{1,1}{1,2}^{e^{Q_1\tau_1}}
	\ncline{->}{1,2}{1,3}^{e^{Q_2\tau_2}}
	\ncline{->}{1,5}{1,6}^{e^{Q_m\tau_m}}
	\psset{linestyle=dashed}\ncline{->}{1,3}{1,5}
	\endpsmatrix
  $
  \vspace{-3em}
  \caption{An inhomogeneous process.}
  \label{fig:inhomoprocess}
\end{figure}

For example, consider the general time reversible model (GTR) \citep{felsenstein2004}, with rate-matrices $Q$ satisfying $QD(\bm{\pi})=D(\bm{\pi})Q^T$ (i.e. $QD(\bm{\pi})$ is symmetric) where $\bm{\pi}$ is any (column) \emph{distribution vector}\footnote{We exclude $\pi_i\!=\!0$ in order to avoid some unimportant technicalities in what follows.}:
\[\bm{\pi}^T\!=\!(\pi_1,\pi_2,\ldots,\pi_n),\quad \pi_i>0,\quad \pi_1+\pi_2+\ldots+\pi_n\!=\!1,\] 
and $D(\bm{\pi})$ is the matrix with $\bm{\pi}$ on the diagonal and zeros elsewhere.
Taken over the complex field we can express the GTR model using the constraints
\begin{eqnarray}\label{eq:GTRcond}
\mathfrak{L}_{GTR}=\left\{Q \in \mathfrak{L}_{GMM} : \hspace{0em}\exists \hspace{0em} \bm{\pi} \text{ s.t. } QD(\bm{\pi})=D(\bm{\pi})Q^T \right\}.
\end{eqnarray}
Notice that, because $QD(\bm{\pi})=D(\bm{\pi})Q^T$ implies $Q^mD(\bm{\pi})=D(\bm{\pi})(Q^m)^T$ for all $m$, the GTR model expressed as transition matrices satisfies exactly the same conditions:
\[
e^{\mathfrak{L}_{GTR}}=\left\{M\in e^{\mathfrak{L}_{GMM}}: \hspace{0em}\exists \hspace{0em} \bm{\pi} \text{ s.t. } MD(\bm{\pi})=D(\bm{\pi})M^T \right\}.
\]

To show that the GTR model does not form a Lie algebra, we need the following result.
\begin{lem}\label{lem:posdiagcond}
If $X\in \mathbb{M}_n\left(\mathbb{C}\right)$ satisfies $XD(v)=-D(v)X$ for some vector $v=(v_1,v_2,\ldots, v_n)^T$, then, for all $i$ and $j$, either the matrix entry $X_{ij}=0$ \emph{or} $v_i=-v_j$.
\end{lem}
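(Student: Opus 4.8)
The plan is to prove the statement by a direct entry-wise computation. The hypothesis $XD(v)=-D(v)X$ is an equality of $n\times n$ matrices, so the strategy is simply to read off what it asserts in each coordinate $(i,j)$ and factorise.

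First I would record the elementary effect of multiplying by a diagonal matrix: right-multiplication by $D(v)$ scales columns and left-multiplication scales rows, so that $(XD(v))_{ij}=X_{ij}v_j$ and $(D(v)X)_{ij}=v_iX_{ij}$. Substituting these two expressions into the hypothesis yields, for every pair of indices $(i,j)$, the scalar identity $X_{ij}v_j=-v_iX_{ij}$, which I would rearrange as $X_{ij}(v_i+v_j)=0$.

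Finally, since we are working over the field $\mathbb{C}$ (in particular an integral domain with no zero divisors), the vanishing of the product $X_{ij}(v_i+v_j)$ forces one of the two factors to vanish: either $X_{ij}=0$ or $v_i+v_j=0$, the latter being exactly the condition $v_i=-v_j$. As this dichotomy holds for each choice of $i$ and $j$, the lemma follows.

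There is essentially no obstacle here; the proof is a one-line calculation once the action of $D(v)$ on rows and columns is written out correctly. The only points worth a moment's care are keeping straight which index the diagonal factor touches on each side (column index $j$ on the right, row index $i$ on the left), and noting that the argument applies uniformly to the diagonal entries as well, where it specialises to the statement that $X_{ii}=0$ unless $v_i=0$.
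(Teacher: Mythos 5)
Your proof is correct and follows exactly the paper's argument: both reduce the matrix identity to the componentwise relation $X_{ij}v_j=-v_iX_{ij}$ and conclude by factoring $X_{ij}(v_i+v_j)=0$ over $\mathbb{C}$. You simply spell out the final dichotomy more explicitly than the paper does.
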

\begin{proof}
The condition on $X$ can be expressed in components as $X_{ij}v_j=-v_iX_{ij}$ for all $i$ and $j$.
The result follows directly by considering the individual cases.
\end{proof}
From now on, we denote the $n\times n$ identity matrix as $\id_n$ (or simply as $\id$ when $n$ is understood or unspecified).
Now, consider two rate matrices $Q_1$ and $Q_2$ that satisfy (\ref{eq:GTRcond}) for the uniform distribution $\bm{\pi}\!=\!\fra{1}{n}\bm{\theta}\!=\!\fra{1}{n}(1,1,\ldots,1)$.
As $D(\fra{1}{n}\bm{\theta})=\fra{1}{n}\id_n$, it is clear that the required condition is equivalent to demanding that $Q_1$ and $Q_2$ are symmetric.
If the rate-matrices of GTR model form a Lie algebra, the commutator $\lie{Q_1}{Q_2}$ must satisfy (\ref{eq:GTRcond}) for some (possibly different) distribution vector $\widehat{\bm{\pi}}$.
However,
\beqn
D(\widehat{\bm{\pi}})\lie{Q_1}{Q_2}^T=D(\widehat{\bm{\pi}})\left(Q_1Q_2-Q_2Q_1\right)^T=D(\widehat{\bm{\pi}})\left(Q_2Q_1-Q_1Q_2\right)=-D(\widehat{\bm{\pi}})\lie{Q_1}{Q_2},\nonumber
\eqn
where the second equality follows because $Q_1$ and $Q_2$ are symmetric by assumption.
Thus the GTR condition on the commutator becomes $\lie{Q_1}{Q_2}D(\widehat{\bm{\pi}})=-D(\widehat{\bm{\pi}})\lie{Q_1}{Q_2}$, and by Lemma~\ref{lem:posdiagcond} this is impossible \emph{unless} $\lie{Q_1}{Q_2}=0$.
\tr{As this is obviously not true in general (we will consider a particular example below), we see directly that}
\tb{To confirm that this is not true in general, consider the $n\times n$ rate-matrices
\beqn
Q_1=\left(\begin{array}{cccccc}
\ast & \alpha & \beta & 0 & \ldots & 0 \\
\alpha & \ast & 0 & 0 & \ldots & 0 \\
\beta & 0 & 0 & 0 & \ldots & 0 \\
0 & 0 & 0 & 0 &  & 0 \\
\vdots & \vdots   & \vdots  &  &  \ddots & \vdots \\
0 & 0 & 0 & 0 & \ldots & 0 
\end{array}
\right),
\qquad
Q_2=\left(\begin{array}{cccccc}
\ast & \alpha & \beta' & 0 & \ldots & 0 \\
\alpha & \ast & 0 & 0 & \ldots & 0 \\
\beta' & 0 & 0 & 0 & \ldots & 0 \\
0 & 0 & 0 & 0 &  & 0 \\
\vdots & \vdots & \vdots &  &  \ddots & \vdots  \\
0 & 0 & 0 & 0 & \ldots & 0 
\end{array}
\right)\nonumber,
\eqn
(where the missing entries are chosen so these matrices have zero column-sums).
Clearly $Q_1,Q_2\in \mathfrak{L}_{GTR}$ and $\lie{Q_1}{Q_2}=0$ if and only if $\beta\!=\!\beta'$.
Thus we conclude that}
\begin{res}\label{res:GTR}
The rate-matrices of the GTR model do not form a Lie algebra.
\end{res}

\tb{
\begin{res}\label{res:GTRnotClosed}
The GTR model is not multiplicatively closed.
\end{res}
\begin{proof}
First we recall that the Perron-Frobenius theorem (see for example \citet{PF}) claims that for each matrix $M$ with strictly positive matrix entries there is exactly one probability vector $\pi$ satisfying $M\pi=\pi$.
We consider two Markov matrices $M_1$ and $M_2$ the GTR condition for the uniform distibution, ie. they are symmetric: $M_1^T\!=\!M_1$ and $M_2^T\!=\!M_2$ (see above).
If GTR is multiplicatively closed there must exist a distribution vector $\widehat{\pi}$ such that the product $M_1M_2$ satisfies $M_1M_2D(\widehat{\pi})=D(\widehat{\pi})(M_1M_2)^T$.
However, $(M_1M_2)^T=M_2^TM_1^T=M_2M_1$, thus the required condition is
\beqn\label{eq:cond}
M_1M_2D(\widehat{\pi})=D(\widehat{\pi})M_2M_1.
\eqn  
Now $M_1M_2$ and $M_2M_1$ are both Markov matrices, so they satisfy
\beqn
\theta^TM_2M_1=\theta^T,\quad \theta^TM_1M_2=\theta^T.\nonumber
\eqn
Taking the transpose we have
\beqn\label{eq:PFuniform}
M_1M_2\theta=\theta,\quad M_2M_1 \theta= \theta.
\eqn
On the other hand, consider
\beqn\label{eq:PFnonuniform}
M_1M_2\widehat{\pi}=M_1M_2(D(\widehat{\pi})\theta)) 
&=D(\widehat{\pi})M_2M_1\theta\\
&=D(\widehat{\pi})\theta=\widehat{\pi}.
\eqn
Now we assume that $M_1$ and $M_2$ are chosen such that $M_1M_2$ satisfies the condition of the Perron-Frobenius theorem, and we see from (\ref{eq:PFuniform}) and (\ref{eq:PFnonuniform}) that we must have $\widehat{\pi}=\fra{1}{n}\theta$.
However, this implies that $D(\widehat{\pi})=\fra{1}{n}\id$ which from (\ref{eq:cond}) implies $M_1M_2 = M_2M_1$.
The proof is completed by finding particular examples of $M_1$ and $M_2$ that satisfy the required conditions and $M_1M_2 \neq M_2M_1$.
To this end consider the $n\times n$ Markov matrices:
\beqn
M_1=\left(\begin{array}{cccccc}
\ast & a & b & c & \ldots & c \\
a & \ast & c & c & \ldots & c \\
b & c & \ast & c & \ldots & c \\
c & c & c & \ast &  & c \\
\vdots & \vdots   & \vdots  &  &  \ddots & \vdots \\
c & c & c & c & \ldots & \ast 
\end{array}
\right),
\qquad
M_2=\left(\begin{array}{cccccc}
\ast & a & b' & c & \ldots & c \\
a & \ast & c & c & \ldots & c \\
b' & c & \ast & c & \ldots & c \\
c & c & c & \ast &  & c \\
\vdots & \vdots & \vdots &  &  \ddots & \vdots  \\
c & c & c & c & \ldots & \ast 
\end{array}
\right)\nonumber,
\eqn
where $a,b,c>0$, $a+b+(n-3)c<1$ (with similar for $b'$), and the missing entries are chosen so these matrices have unit column-sums.
These matrices are symmetric and their product satisfies the conditions of the Perron-Frobenius theorem.
Finally it is easy to confirm that $M_1M_2\neq M_2M_1$, as required.
\end{proof}
}
 
\tr{To explicitly confirm that GTR is not multiplicatively closed we consider the following example (remember that the Lie algebra condition is only one of sufficiency).
We consider a 3-state GTR model with two rate-matrices satisfying (\ref{eq:GTRcond}) for the uniform distribution:
\beqn
Q_1=\left(\begin{array}{rrr}-0.15 & 0.1 & 0.05 \\ 0.1 & -0.2 & 0.1 \\ 0.05 & 0.1 & -0.2\end{array}\right),\qquad Q_2=\left(\begin{array}{rrr}-0.2 & 0.1 
& 0.1 \\ 0.1 & -0.15 & 0.05 \\ 0.1 & 0.05 & -0.15\end{array}\right).\nonumber
\eqn
Firstly, we confirm that these rate-matrices do not commute:
\beqn
\lie{Q_1}{Q_2}=\left(\begin{array}{rrr} 0 & 0.0025 & 0.0025 \\ -0.0025 & 0 & 0.005 \\ -0.0025 & -0.005 & 0\end{array}\right)\nonumber.
\eqn
We computed the corresponding matrix exponentials with \texttt{Mathematica} \citep{mathematica} and found that
\beqn
M_1&=e^{Q_1}\simeq\left(\begin{array}{lll}0.866 & 0.0864 & 0.0463 \\ 0.0864 & 0.827 & 0.0843 \\ 0.0463 & 0.0843 & 0.824\end{array}\right),\\ M_2&=e^{Q_2}\simeq\left(\begin{array}{lll}0.827 & 0.0864 & 0.0864 \\ 0.0864 & 0.866 & 0.0474 \\
0.0864 & 0.0474 & 0.866\end{array}\right).\nonumber
\eqn
Now, if GTR were multiplicative closed we would expect that the product $M_2M_1$ will satisfy 
\[
\Delta:=M_2M_1D(\widehat{\bm{\pi}})-D(\widehat{\bm{\pi}})(M_2M_1)^T=0
\] 
for some distribution vector $\widehat{\bm{\pi}}$.
However, direct computation shows that
\beqn
\Delta\simeq\left(\begin{array}{lll} 0 & -0.152 \widehat{\pi}_1 + 0.150 \widehat{\pi}_2 & -0.119 \widehat{\pi}_1 + 0.117 \widehat{\pi}_3 \\ 0.152 \widehat{\pi}_1 - 0.150 \widehat{\pi}_2 & 0 &  0.116 \widehat{\pi}_3 - 0.120 \widehat{\pi}_2 \\ 0.119 \widehat{\pi}_1 - 0.117 \widehat{\pi}_3 &  -0.116 \widehat{\pi}_3 + 0.120 \widehat{\pi}_2  & 0\end{array}\right),\nonumber
\eqn 
and it is easily checked that there is no $\widehat{\pi}_1,\widehat{\pi}_2,\widehat{\pi}_3>0$ with $\widehat{\pi}_1+\widehat{\pi}_2+\widehat{\pi}_3=1$ such that $\Delta\!=\!0$.
}
\tr{
As this example can be embedded into a $n$-state model by taking a block-diagonal form, ie.
\beqn
\widetilde{M_i}=\left(\begin{array}{cc} M_i & 0 \\ 0 & \id_{n-3} \end{array}\right),\nonumber
\eqn
we see directly that
\begin{res}
The GTR model is not multiplicatively closed.
\end{res}
} 
 
These results show that the GTR model is not multiplicatively closed and hence one cannot interpret a fit of the GTR model as a homogeneous average of a true inhomogeneous process.
This observation poses serious questions of interpretation for any phylogenetic inferences achieved using the GTR model.
It would be very interesting to perform an exploratory study of how much (or how little) the non-closure of GTR affects phylogenetic inference in practice, however this is outside the scope of our present discussion.

\begin{rem}
The reader may object that our definition (\ref{eq:GTRcond}) of the GTR model is too general and may prefer to define a different copy of ``the general time-reversible model'' for each fixed distribution vector $\bm{\pi}$:
\[\mathfrak{L}_{GTR_{\bm{\pi}}}:=\left\{Q \in \mathfrak{L}_{GMM} :  QD(\bm{\pi})=D(\bm{\pi})Q^T \right\}.\]
After all, every rate-matrix in $\mathfrak{L}_{GTR_{\bm{\pi}}}$ has stationary distribution $\bm{\pi}$: 
\[QD(\bm{\pi})=D(\bm{\pi})Q^T\Rightarrow Q\bm{\pi}=\mathbf{0},\]
and thus $\mathfrak{L}_{GTR_{\bm{\pi}}}$ seems to be a reasonably well motivated model.  
However, it is shown in \citet{jarvis2010} that $\mathfrak{L}_{GTR_{\bm{\pi}}}$ does not form a Lie algebra for any $\bm{\pi}$, \tr{explicit three-state example given above shows that $\mathfrak{L}_{GTR_{\bm{\pi}}}$}\tb{and the proof of Result~\ref{res:GTRnotClosed} shows that ${GTR_{\bm{\pi}}}$} with $\bm{\pi}$ the uniform distribution is definitely not closed.
\tb{Thus, it seems unlikely that $\mathfrak{L}_{GTR_{\bm{\pi}}}$ is closed for any choice of $\bm{\pi}$ (see below for a proof).}
In any case, in a practical context GTR is usually implemented by considering $\bm{\pi}\!=\!(\pi_1,\pi_2,\ldots,\pi_n)^T$ as providing ``free parameters'' that are inferred using the data at hand, and we therefore argue that it is the more general form of GTR model (\ref{eq:GTRcond}) that is most relevant.
\end{rem}

\tb{
\begin{res}
The $GTR_{\bm{\pi}}$ model is not closed for any distribution vector $\pi$.
\end{res}
\begin{proof}
If $GTR_{\bm{\pi}}$ is closed we must have $M_1M_2D(\pi)=D(\pi)(M_1M_2)^T=D(\pi)M_2^TM_1^T$ for all $M_1,M_2\in GTR_{\bm{\pi}}$. 
However $M_iD(\pi)=D(\pi)M_i^T$ then implies that $M_1M_2D(\pi)=M_2M_1D(\pi)$.
Since $\pi$ is a distribution vector (see the definition above), we conclude that $D(\pi)^{-1}$ exists, so we require $M_1M_2=M_2M_1$, ie. $GTR_{\bm{\pi}}$ is abelian.
To confirm that this cannot be the case, consider two rate matrices $Q_1,Q_2\in \mathfrak{L}_{GTR_{\bm{\pi}}}$ such that $\lie{Q_1}{Q_2}\neq 0$ and the two paths $A(t)=e^{tQ_1}$ and $B(t)=e^{tQ_2}$.
Now $C(s,t)=A(t)B(s)A(t)^{-1}\in GTR_{\bm{\pi}}$ for all $s,t$, but if $GTR_{\bm{\pi}}$ is abelian we have $C(s,t)=B(s)$, so $C(s,t)$ is independent of $t$.
On the other hand
\beqn
\frac{d}{d s}\left. \left( \left. \frac{d \, C(s,t)}{d t} \right |_{t=0}\right)\right|_{s=0}=A'(0)B'(0)-B'(0)A'(0)=\lie{Q_1}{Q_2}\neq 0,\nonumber
\eqn
which is a contradiction.
Constructing examples of $Q_1$ and $Q_2$ satisfying the required conditions completes the proof.
Consider the $n\times n$ rate matrices
\beqn
Q_1=\left(\begin{array}{cccccc}
\ast & \pi_1\alpha & \pi_1\beta & 0 & \ldots & 0 \\
\pi_2\alpha & \ast & 0 & 0 & \ldots & 0 \\
\pi_3\beta & 0 & 0 & 0 & \ldots & 0 \\
0 & 0 & 0 & 0 &  & 0 \\
\vdots & \vdots   & \vdots  &  &  \ddots & \vdots \\
0 & 0 & 0 & 0 & \ldots & 0 
\end{array}
\right),
\qquad
Q_2=\left(\begin{array}{cccccc}
\ast & \pi_1\alpha & \pi_1\beta' & 0 & \ldots & 0 \\
\pi_2\alpha & \ast & 0 & 0 & \ldots & 0 \\
\pi_3\beta' & 0 & 0 & 0 & \ldots & 0 \\
0 & 0 & 0 & 0 &  & 0 \\
\vdots & \vdots & \vdots &  &  \ddots & \vdots  \\
0 & 0 & 0 & 0 & \ldots & 0 
\end{array}
\right)\nonumber.
\eqn
It is easy to show that $Q_iD(\pi)=D(\pi)Q_i^T$ while $\lie{Q_1}{Q_2}=0$ if and only if $\beta'\!=\!\beta$.
\end{proof}
}

Presently we will recall some key definitions and results from elementary Lie theory, and go on to describe the Lie algebra associated with the general Markov model.
For the reader who is unfamiliar with the theory, we can recommend the elementary texts \citet{stillwell2008} and \citet{tapp2005}.

\begin{mydef}
A \emph{matrix group} $\mathcal{G}$ is a closed subgroup of $GL(n,\mathbb{C})$.
\end{mydef}

In this definition ``closed'' means in the topological sense. That is, if the limit of a sequence of matrices in $\mathcal{G}$ is non-singular, then the limit is also in $\mathcal{G}$.

\begin{mydef}\label{def:tangentspace}
The \emph{Lie algebra} of a matrix group $\mathcal{G}$ is the tangent space of $\mathcal{G}$ (regarded as a manifold) at the identity: $T_\id(\mathcal{G})$.
That is $X\in T_\id(\mathcal{G})$ is equivalent to the existence of a smooth path $A:\lie{0}{1}\rightarrow \mathcal{G}$ such that $A(0)=\id$ and $A'(0):=\left.\frac{dA\tb{(t)}}{dt}\right|_{t=0}=X$.
\end{mydef}
\private{OLD Jesus: I'm getting confused with the field. We work over the complex field and the Lie algebra defined above inherits complex structure. The dimension of $T_\id(\mathcal{G})$ as a $\mathbb{C}$-vector space equals the dimension of $\mathcal{G}$ (for example, $n^2$ if $\mathcal{G}=GL(n,\mathbb{C})$). Of course, we can consider $T_\id(\mathcal{G})$ as a real vector space, but then, the dimension is twice the dimension of $\mathcal{G}$. This is related to my comment after Lemma 2.4 below. }
\private{Jeremy: I have introduced complexification below. I hope it's ok now. It's a bit confusing but I'm not sure what we can do about this.}

With this definition it is not too hard to show that $T_{\id}(\mathcal{G})$ is a vector space over $\mathbb{R}$, because $X+rY$ is the tangent of the product of two smooth paths $A(t)B(rt)$, \tb{with $A(0)=\id$, $B(0)=\id$, $A'(0)\!=\!X$, $B'(0)\!=\!Y$ and $r\in \mathbb{R}$}.
Also, $T_{\id}(\mathcal{G})$ is closed under Lie brackets, because $C'(0)\!=\!\lie{X}{Y}$ is the tangent of the path $C(t)\!:=\!A(t)B'(0)A(t)^{-1}\in T_\id(\mathcal{G})$ and $C'(0)\in T_\id(\mathcal{G})$ because the tangent space includes all its limit points.

A standard (and powerful) tool in Lie theory is the exponential map:
\beqn
\exp: X\in\mathbb{M}_n(\mathbb{C})\mapsto \exp(X)=\id+X+\frac{X^2}{2!}+\frac{X^3}{3!}+\ldots ,\nonumber 
\eqn
which has infinite radius of convergence, so $e^X$ is defined for all $X$.
Recall that $\det(e^X)\!=\!e^{tr(X)}$, so we can conclude that $e^X\in GL(n,\mathbb{C})$ for all $X\in\mathbb{M}_n(\mathbb{C})$.
By defining the path $A(t)\!:=\!e^{Xt}$, we see immediately that \tb{$A(0)=\id$ and} $A'(0)=X$ so that $X\in T_{\id}(GL(n,\mathbb{C}))$.
Given that $X$ was chosen arbitrarily, it is clear that $T_\id(GL(n,\mathbb{C}))=\mathbb{M}_n(\mathbb{C})$.

We must however pause to reflect that although $T_\id(GL(n,\mathbb{C}))\!=\!\mathbb{M}_n(\mathbb{C})$, it follows from Definition~\ref{def:tangentspace} that $T_\id(GL(n,\mathbb{C}))$ is a vector space only over $\mathbb{R}$.
To clarify this point, observe that the elementary matrices $\{E_{ij}\}_{i, j}$, with matrix elements $\left(E_{ij}\right)_{kl}=\delta_{ik}\delta_{jl}$, form a basis for $\mathbb{M}_n(\mathbb{C})$ considered as a complex vector space, ie. $\mathbb{M}_n(\mathbb{C})=\lrc{\{E_{ij}\}_{i,j}}$. 
However, by definition $T_\id(GL(n,\mathbb{C}))$ is a real vector space and we see that a basis is given by the set $\{E_{ij},\bm{i}E_{kl}\}_{i, j,k,l}$, where $\bm{i}=\sqrt{-1}$, ie. $T_\id(GL(n,\mathbb{C}))=\lrr{\{E_{ij},\bm{i} E_{kl}\}_{i,j,k,l}}$.
These observations prompt the following definition:
\begin{mydef}
The \emph{complexification} of a (real) Lie algebra $T_\id(\mathcal{G})$ is the complex vector space $T_\id(\mathcal{G})^\mathbb{C}$ spanned by all linear combinations $c_1 X+c_2 Y$ with $X,Y\in T_\id(\mathcal{G})$ and $c_1,c_2\in\mathbb{C}$.
When $T_\id(\mathcal{G})=T_\id(\mathcal{G})^\mathbb{C}$ as \emph{sets} and $T_\id(\mathcal{G})=T_\id(\mathcal{G})^\mathbb{C}=\lrc{X_1,X_2,\ldots,X_k}$, where $\{X_i\}_{1\leq i\leq k}$ is a set of linearly independent tangent vectors over $\mathbb{C}$, we say that $\{X_i\}_{1\leq i\leq k}$ forms a \emph{$\mathbb{C}$-basis} of $T_\id(\mathcal{G})$.
\end{mydef}
\begin{rem}
It should be noted that it is not always the case that $T_\id(\mathcal{G})=T_\id(\mathcal{G})^\mathbb{C}$; in general this is only true if $\mathcal{G}$ is a manifold over $\mathbb{C}$.
Since this is the case for the matrix groups that we will consider in this article (by the definitions given at the start of this section), we will henceforth implicitly assume that complexification has been performed and that the Lie algebras we discuss are vector spaces over $\mathbb{C}$. 
\end{rem}

Presently we derive the Lie algebra of the general Markov model, as was first given in \citet{johnson1985}.
The commutator of any two elementary matrices can be checked explicitly:
\beqn\label{eq:Eijcom}
\left[E_{ij},E_{kl}\right]=E_{il}\delta_{jl}-E_{kj}\delta_{il}.
\eqn

We then define the elementary rate-matrices as
\begin{eqnarray*}
L_{ij}:=E_{ij}-E_{jj}, 
\end{eqnarray*}
for every $i\!\neq\! j$ and note that $\bm{\theta}^{T}L_{ij}=\bm{0}$ for every couple $(i,j)$ with $i\neq j$, so that these matrices are indeed rate-matrices.
It is then clear that we can express any rate-matrix $Q$ as a linear sum:
\beqn
Q=\sum_{\tb{i\neq j}}\alpha_{\tb{ij}}L_{ij},\nonumber
\eqn
and we have:
\begin{lem}\label{lem:GMMtangent}
The matrices $\{L_{ij}\}_{i\neq j}$ form a $\mathbb{C}$-basis for the tangent space of $GL_{1}(n,\mathbb{C})$.
\end{lem}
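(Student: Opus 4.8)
The plan is to show two things: that the $\{L_{ij}\}_{i\neq j}$ are linearly independent over $\mathbb{C}$, and that they span $T_\id(GL_1(n,\mathbb{C}))$ as a complex vector space. Since we have already observed that $T_\id(GL(n,\mathbb{C}))=\mathbb{M}_n(\mathbb{C})$ after complexification, and since $GL_1(n,\mathbb{C})$ is a closed subgroup of $GL(n,\mathbb{C})$, the tangent space $T_\id(GL_1(n,\mathbb{C}))$ sits inside $\mathbb{M}_n(\mathbb{C})$, and we only need to identify which matrices arise as tangent vectors of smooth paths lying in $GL_1(n,\mathbb{C})$.

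First I would establish the inclusion $\mathfrak{L}_{GMM}\subseteq T_\id(GL_1(n,\mathbb{C}))$. Given any $Q$ with $\bm{\theta}^TQ=\bm{0}^T$, the path $A(t):=e^{Qt}$ satisfies $\bm{\theta}^TA(t)=\bm{\theta}^Te^{Qt}=\bm{\theta}^T$ (expand the exponential termwise and use $\bm{\theta}^TQ^k=\bm{0}^T$ for $k\geq 1$), and $\det A(t)=e^{t\,\mathrm{tr}(Q)}\neq 0$, so $A(t)\in GL_1(n,\mathbb{C})$ for all $t$, with $A(0)=\id$ and $A'(0)=Q$. Hence $Q\in T_\id(GL_1(n,\mathbb{C}))$. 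Conversely, for the inclusion $T_\id(GL_1(n,\mathbb{C}))\subseteq \mathfrak{L}_{GMM}$, take any smooth path $A:[0,1]\to GL_1(n,\mathbb{C})$ with $A(0)=\id$; then $\bm{\theta}^TA(t)=\bm{\theta}^T$ identically, and differentiating at $t=0$ gives $\bm{\theta}^TA'(0)=\bm{0}^T$, so $A'(0)\in\mathfrak{L}_{GMM}$. Thus $T_\id(GL_1(n,\mathbb{C}))=\mathfrak{L}_{GMM}$ as sets, and since $GL_1(n,\mathbb{C})$ is a complex manifold the remark on complexification applies, so this is a complex vector space.

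It then remains to check that $\{L_{ij}\}_{i\neq j}$ is a $\mathbb{C}$-basis of $\mathfrak{L}_{GMM}$. Spanning is immediate from the displayed expansion $Q=\sum_{i\neq j}\alpha_{ij}L_{ij}$ already noted in the text (any $Q$ with zero column sums has its diagonal entries determined by the off-diagonal ones, namely $Q_{jj}=-\sum_{i\neq j}Q_{ij}$, and one reads off $\alpha_{ij}=Q_{ij}$). For linear independence, suppose $\sum_{i\neq j}\alpha_{ij}L_{ij}=0$; inspecting the $(i,j)$ entry with $i\neq j$ gives $\alpha_{ij}=0$ directly, since $E_{ij}$ contributes to position $(i,j)$ and $E_{jj}$ (the only other summand appearing in any $L_{kl}$ at an off-diagonal slot) cannot. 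This also confirms $\dim_{\mathbb{C}}\mathfrak{L}_{GMM}=n^2-n$, consistent with $\dim_{\mathbb{C}}GL_1(n,\mathbb{C})=n^2-n$ via the affine-group isomorphism mentioned earlier.

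The only genuinely delicate point is the converse inclusion argument — making sure that a tangent vector of $GL_1(n,\mathbb{C})$, defined via an arbitrary smooth path rather than the exponential path, really does satisfy the column-sum-zero condition; but this is just differentiating the linear constraint $\bm{\theta}^TA(t)=\bm{\theta}^T$, so it is routine. Everything else is bookkeeping with the elementary matrices and the relation $L_{ij}=E_{ij}-E_{jj}$.
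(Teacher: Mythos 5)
Your proof is correct, but it takes a genuinely different route from the paper's. The paper argues by a dimension count: it invokes the general fact that the tangent space of a Lie group at the identity has the same dimension as the group as a manifold, asserts that $GL_1(n,\mathbb{C})$ has dimension $n(n-1)$, exhibits each $L_{ij}$ as the tangent vector of the smooth path $e^{L_{ij}t}\in GL_1(n,\mathbb{C})$, and concludes from linear independence that the $n(n-1)$ matrices $L_{ij}$ must already span the whole tangent space. You instead compute the tangent space outright, proving both inclusions $\mathfrak{L}_{GMM}\subseteq T_\id(GL_1(n,\mathbb{C}))$ (via the exponential paths, checking the column-sum and determinant conditions) and $T_\id(GL_1(n,\mathbb{C}))\subseteq \mathfrak{L}_{GMM}$ (by differentiating the linear constraint $\bm{\theta}^TA(t)=\bm{\theta}^T$ along an arbitrary smooth path), and then verify by elementary linear algebra that $\{L_{ij}\}_{i\neq j}$ is a basis of $\mathfrak{L}_{GMM}$. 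Your converse inclusion is precisely the step that the paper's dimension count replaces; making it explicit buys you a self-contained argument that does not presuppose the manifold dimension of $GL_1(n,\mathbb{C})$, and it establishes the identification $T_\id(GL_1(n,\mathbb{C}))=\mathfrak{L}_{GMM}$ directly rather than as a byproduct. The cost is slightly more bookkeeping; both arguments are sound.
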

\begin{proof}
General Lie theory states that the dimension of the tangent space is equal to the dimension of the Lie group as a manifold.
Considering $GL_1(n,\mathbb{C})$ as a subgroup of $GL(n,\mathbb{C})$ it is clear that the dimension of $GL_1(n,\mathbb{C})$ as a manifold is $n(n-1)$.
Now, for each $i\!\neq \!j$, $L_{ij}$ is a tangent vector of the smooth path $A^{(ij)}(t):=e^{L_{ij}t}\in GL_1(n,\mathbb{C})$.
Also, there are $n(n-1)$ of the $L_{ij}$ and they are obviously linearly independent.
Therefore the tangent space of $GL_1(n,\mathbb{C})$ \tb{at the identity} is $\lr{L_{ij}}_\mathbb{C}$.  
\end{proof}

\begin{res}
The rate-matrices $\mathfrak{L}_{GMM}=\lrc{\{L_{ij}\}_{i\neq j}}$ form a Lie algebra.
\end{res}
\begin{proof}
The result follows directly as a consequence of Lemma~\ref{lem:GMMtangent}.
\end{proof}
Indeed, the Lie algebra structure of $\mathfrak{L}_{GMM}$ follows from (\ref{eq:Eijcom}) and a little manipulation with the elementary rate-matrices:
\beqn\label{eq:GMMalg}
\left[L_{ij},L_{kl}\right]=\left(L_{il}-L_{jl}\right)\left(\delta_{jk}-\delta_{jl}\right)-\left(L_{kj}-L_{lj}\right)\left(\delta_{il}-\delta_{jl}\right).
\eqn
For convenience in subsequent calculations we record a few individual cases of these commutation relations where we take $i,j,k,l$ all to be distinct:
\begin{center}
\begin{tabular}{lll}
$\left[L_{ij},L_{kl}\right]=\left[L_{ij},L_{il}\right]=0$, & $\left[L_{ij},L_{ki}\right]=L_{ij}-L_{kj}$, & $\left[L_{ij},L_{jl}\right]=L_{il}-L_{jl}$, \\[.5em] $\left[L_{ij},L_{kj}\right]=L_{kj}-L_{ij}$, & $\left[L_{ij},L_{ji}\right]=L_{ij}-L_{ji}$.
\end{tabular}
\end{center}

Of course, this is a very convenient basis for $\mathfrak{L}_{GMM}$ because any ``stochastic'' rate-matrix $Q$ (with real and non-negative off-diagonal entries) can be written as $Q=\sum_{i\neq j}\alpha_{ij}L_{ij}$ where the stochastic condition is simply that the coefficients $\alpha_{ij}$ are real and non-negative.
This prompts the following definition:

\begin{mydef}\label{stochasticbasis}
A Lie algebra $\mathfrak{L}\subset \mathfrak{L}_{GMM}$ has a \emph{stochastic basis} if there exists a basis $B_\LL=\{L_1,L_2,\ldots ,L_d\}$ of $\mathfrak{L}$ such that each $L_k$ is a convex linear combination of the $L_{ij}$, i.e.  $L_{k}=\sum_{i\neq j}\alpha_{ij}L_{ij}$ where $\alpha_{ij}\geq 0$. 
In this case, we say that $e^\mathfrak{L}$ is a \emph{Lie Markov model}.
\end{mydef}
\tb{In other words, each $L_k$ lies in the stochastic cone spanned by the vectors $\{L_{ij}\}_{i\neq j}$. }

\begin{mydef}
\tb{The \emph{dimension} of a Lie Markov model $e^\mathfrak{L}$ is the vector-space dimension of $\mathfrak{L}$.}
\end{mydef}

\begin{rem}
\tb{In most cases, this definition of dimension corresponds exactly to the number of ``free parameters'' of a given model.
For example, the dimension of the general Markov model is $n(n-1)$.}
\end{rem}

\section{Binary Lie Markov models}\label{sec3}
In this section we describe the two-state (or binary) Lie Markov models in full detail.
We will see that there is actually a continuous infinity of one-dimensional Lie Markov models in this case, and it is clear that this property generalizes to more states.
This will motivate us to consider the symmetry of the two-state models to enable us to give a classification, and demanding similar symmetry in the models with more states will be the key to significant progress in those cases.

\private{Jesus: maybe because of my poor English, but ``undesirable'' sounds a bit over the top to me. In any case, I think we should say something about why we consider this property as undesirable.}
\private{Jeremy: I have removed the undesirable word, it wasn't needed.}

Consider the general rate-matrix model on two states.
As is discussed in \citep{jarvis2010,sumner2010}, any $2\times 2$ rate-matrix can be expressed as the linear sum 
\beqn
Q=\alpha L_{\alpha}+\beta L_{\beta}=\left(
\begin{array}{rr}
	-\alpha & \beta \\
	\alpha & -\beta  \\
\end{array}\right).\nonumber
\eqn 
where
\beqn
L_{\alpha}:=L_{12}=\left(
\begin{array}{rr}
	-1 & 0 \\
	1  & 0 \\
\end{array}\right),
\qquad
L_{\beta}:=L_{21}=\left(
\begin{array}{rr}
	0 & 1 \\
	0  & -1 \\
\end{array}\right).\nonumber
\eqn 
Thus the Lie algebra of $GL_1(2,\mathbb{C})$ can be expressed as $\mathfrak{L}_{GMM}=\lrc{L_\alpha,L_\beta}$, with the non-trivial commutator $\left[L_\alpha,L_\beta\right]=L_\alpha-L_\beta$.
It is clear that $\mathfrak{L}_{GMM}$ equals the space of all $2\times 2$ rate-matrices, thus $\mathfrak{L}_{GMM}$ is the only two-state, two-dimensional Markov Lie algebra.

In the one-dimensional case, we can choose any tangent vector $L=\alpha L_\alpha+\beta L_\beta\in\mathfrak{L}_{GMM}$, with fixed $\alpha$ and $\beta$ not both zero, and take $\mathfrak{L}=\lrc{L}$.
The Lie algebra is trivial since the only Lie bracket is identically zero.
Additionally, two tangent vectors $L=\alpha L_\alpha+\beta L_\beta$ and $L'=\alpha' L_\alpha+\beta' L_\beta$ give the same model if $\lrc{L}=\lrc{L'}$, which occurs if $(\alpha,\beta)=(\lambda\alpha',\lambda\beta')$ for some complex number  $\lambda\neq 0$. 
Thus by varying $(\alpha,\beta)$ up to this scaling equivalence we are led to a complex projective space (that is, $\mathbb{C}\mathbb{P}_1$) of one-dimensional two-state Lie Markov models.

Now this is not particularly satisfying as we would at least like to identify the ``binary-symmetric model'', $\alpha\!=\!\beta$, as a special point in the $\mathbb{C}\mathbb{P}_1$ continuum of models.
It is clear that the binary-symmetric model has a certain type of symmetry that the other one-dimensional models do not share, and it is the exploration of this symmetry that will be of great assistance to us when we study models with more than two states.

We have been implicitly considering the finite set $\{1,2\}$ as the states of the continuous-time Markov chain on two states.
Consider the group of permutations of these two states: $\SG_2\cong \{e,(12)\}$.
Notice that there is an \emph{action} of $\SG_2$ on the $2\times 2$ rate-matrices \tb{$Q\in \mathfrak{L}_{GMM}$}:
\[Q\mapsto K_{\sigma}QK_{\sigma}^{-1},\quad \forall\sigma\in\SG_2,\]
where $K_{\sigma}$ is the $2\times 2$ permutation matrix representing $\sigma$.
We demand that the one-dimensional Lie algebra $\mathfrak{L}=\lr{L}_\mathbb{C}$ is \emph{invariant} under this action of $\SG_2$, ie.
\beqn
\lrc{K_\sigma LK_\sigma^{-1}}=\lrc{L},\quad \forall\sigma\in\SG_2.\nonumber
\eqn
Of course, the only non-trivial constraint occurs for the permutation $\sigma\!=\!(12)$, and we are led directly to the binary-symmetric model by noticing that $\alpha\!=\!\pm\beta$ are the only solutions of the equation $K_{(12)} L K_{(12)}^{-1}=K_{(12)} \left(\alpha L_{12}+\beta L_{21}\right) K_{(12)}^{-1}=\alpha L_{21}+\beta L_{12}=\mu L$, with $\mu\in \mathbb{C}$.
The solution $\alpha\!=\!\beta$ is exactly the binary-symmetric model, and we can reject the solution $\alpha\!=\!-\beta$ as this would mean that $L\propto L_\alpha-L_\beta$, which would not provide a stochastic basis for $\lrc{L}$.
This is the key to understanding how the binary-symmetric model sits as a special point in the $\mathbb{C}\mathbb{P}_1$ of one-dimensional Lie Markov models.

Further, the two-dimensional Lie algebra $\mathfrak{L}_{GMM}=\lrc{L_\alpha,L_\beta}$ is also invariant under the action of $\SG_2$.
This is because $K_{(12)}\lrc{L_\alpha,L_\beta}K^{-1}_{(12)}=\lrc{L_\beta,L_\alpha}=\lrc{L_\alpha,L_\beta}$.
The fact that this model has $\SG_2$ symmetry equates to the fact that, as measured by the model itself, neither of the two states of the Markov chain is distinguished from the other.

\begin{res}\label{res:2state}
In the case of two state continuous-time Markov models, there are exactly two Lie Markov models with $\SG_2$ symmetry:
\begin{enumerate}
\item The binary-symmetric model generated by $\lrc{L_\alpha+L_\beta}$ with dimension one.
\item The general Markov model generated by $\lrc{L_\alpha,L_\beta}$ with dimension two.
\end{enumerate}
\end{res}

Any other choice of one-dimensional model demands a choice of $\alpha$ and $\beta$, which from a practical point of view, is somewhat equivalent to using the general rate-matrix model and using some kind of inference to choose $\alpha$ and $\beta$.
Thus we find that our general approach of exploring Lie Markov models with a given symmetry has thus far achieved a satisfactory classification of two-state models. 
In the next section we explore this concept of model symmetry in more detail. 

\section{\tb{Permutation symmetries of Markov models}}\label{sec4}
Roughly speaking, we say that a model has a \emph{symmetry} if under a nucleotide permutation ``something'' doesn't change.
The purpose of this section is to discuss what this something should be.
\tb{In what follows we label nucleotides $A,C,G,T$ with the integers $1,2,3,4$ respectively.}

\subsection{Equivariant models}
Consider the graphical representation of the well-known Kimura 3ST model (K3ST) given in Figure~\ref{fig:K3STgraph}.
What this graph implies is that any rate-matrix chosen from the K3ST model has three \emph{free parameters} that must be statistically inferred using the data at hand. 
By considering the graph, the most obvious symmetry that this model can have is the permutation of the nucleotides that leave the graph invariant. 
It is well known that these permutations form the group
\beqn
\mathbb{Z}_2\times \mathbb{Z}_2 \cong \{e,(12)(34),(13)(24),(14)(23)\}.\nonumber
\eqn

\begin{figure}
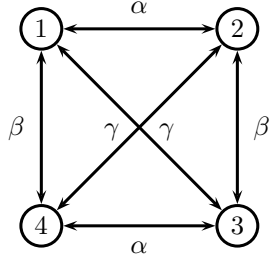

\centering
$
\psmatrix[colsep=2cm,rowsep=2cm,linewidth=.04cm,mnode=circle]
1&2\\
4&3
\ncline{<->}{1,1}{1,2}^\alpha
\ncline{<->}{2,1}{2,2}_\alpha
\ncline{<->}{1,1}{2,1}<\beta
\ncline{<->}{1,2}{2,2}>\beta
\ncline{<->}{1,1}{2,2}<\gamma
\ncline{<->}{1,2}{2,1}>\gamma
\endpsmatrix
$
\caption{Graphical representation of the K3ST model.}
\label{fig:K3STgraph}
\end{figure}

This observation motivates the following definition first given in \citet{draisma2008}.
\begin{mydef}
Given a group $G\leq \SG_n$, the $G$-\emph{equivariant model} denoted by $\mathbb{M}^G$ is defined as the algebraic evolutionary model obtained when taking transition matrices $M\in \mathbb{M}_n(\mathbb{C})$ such that  $K_{\sigma} M  K_{\sigma}^{-1}=M$ for every $\sigma \in G$ (here, $K_{\sigma}$ means the permutation matrix corresponding to $\sigma$). 
\end{mydef}

\begin{rem}
Notice that with this definition, the $G$-equivariant models are not Markov models.
\end{rem}
\begin{mydef}
Given a group $G\leq \SG_n$, we write $\mathfrak{M}^G=\mathbb{M}^G \cap GL_1(n,\CC)$ and we call it the $G$-\emph{equivariant Markov model}.  
We also write
$\mathfrak{L}^G=\{Q\in \mathfrak{L}_{GMM} : K_{\sigma} Q K_{\sigma}^{-1}=Q \}$ for the set of \emph{$G$-equivariant} rate-matrices. Notice that $\LL^G$ is a vector space. 
\end{mydef}

\private{Jeremy: I'm not sure why you have $\mathfrak{M}^G_1$. Compared to our def of $\mathfrak{M}$ I don't think the subscript is needed. I have removed it. }

\begin{lem} We have $T_{\id}(\mathfrak{M}^G)=\mathfrak{L}^G$.
\end{lem}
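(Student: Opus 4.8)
The plan is to show the two set-theoretic inclusions $\mathfrak{L}^G \subseteq T_{\id}(\mathfrak{M}^G)$ and $T_{\id}(\mathfrak{M}^G)\subseteq \mathfrak{L}^G$ separately, using the path-based Definition~\ref{def:tangentspace} of the tangent space throughout.

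For the inclusion $\mathfrak{L}^G \subseteq T_{\id}(\mathfrak{M}^G)$, I would take an arbitrary $Q\in\mathfrak{L}^G$, so that $Q\in\mathfrak{L}_{GMM}$ and $K_\sigma Q K_\sigma^{-1}=Q$ for all $\sigma\in G$. Then consider the standard path $A(t):=e^{Qt}$. Since $Q$ is a rate-matrix, $\bm{\theta}^T Q=\bm{0}^T$ gives $\bm{\theta}^T e^{Qt}=\bm{\theta}^T$, and $\det(e^{Qt})=e^{t\,\mathrm{tr}(Q)}\neq 0$, so $A(t)\in GL_1(n,\CC)$. Moreover, since conjugation by $K_\sigma$ commutes with the matrix exponential, $K_\sigma e^{Qt}K_\sigma^{-1}=e^{K_\sigma Q K_\sigma^{-1} t}=e^{Qt}$, so in fact $A(t)\in\mathfrak{M}^G=\mathbb{M}^G\cap GL_1(n,\CC)$ for all $t$. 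As $A(0)=\id$ and $A'(0)=Q$, Definition~\ref{def:tangentspace} gives $Q\in T_{\id}(\mathfrak{M}^G)$.

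For the reverse inclusion, take $X\in T_{\id}(\mathfrak{M}^G)$, witnessed by a smooth path $A:[0,1]\to\mathfrak{M}^G$ with $A(0)=\id$, $A'(0)=X$. Since $A(t)\in GL_1(n,\CC)$ we have $\bm{\theta}^T A(t)=\bm{\theta}^T$ for all $t$; differentiating at $t=0$ gives $\bm{\theta}^T X=\bm{0}^T$, i.e. $X\in\mathfrak{L}_{GMM}$. Since $A(t)\in\mathbb{M}^G$ we have $K_\sigma A(t)K_\sigma^{-1}=A(t)$ for all $t$ and all $\sigma\in G$; differentiating at $t=0$ gives $K_\sigma X K_\sigma^{-1}=X$. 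Hence $X\in\mathfrak{L}^G$. Combining the two inclusions yields $T_{\id}(\mathfrak{M}^G)=\mathfrak{L}^G$.

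There is no serious obstacle here; the only point requiring a little care is checking that the exponential path $e^{Qt}$ genuinely lands in $\mathfrak{M}^G$ (not merely in $GL(n,\CC)$), which uses both that rate-matrices exponentiate to column-sum-one matrices and that conjugation commutes with $\exp$. One should also remark (or it is immediate from the linearity already established, $\mathfrak{L}^G$ being a vector space as noted in the preceding definition) that the tangent space so obtained is indeed closed under Lie brackets, consistent with the general discussion after Definition~\ref{def:tangentspace}, so that $\mathfrak{L}^G$ is a Lie algebra; but this is not strictly needed for the stated equality of sets.
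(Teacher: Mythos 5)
Your proof is correct and follows essentially the same route as the paper's: the exponential path $e^{Qt}$ for one inclusion and differentiation of the defining conditions of $\mathfrak{M}^G$ along an arbitrary smooth path for the other. The only difference is that you explicitly verify the column-sum conditions ($\bm{\theta}^T e^{Qt}=\bm{\theta}^T$ and $\bm{\theta}^T X=\bm{0}^T$), which the paper leaves implicit; this is a welcome but minor addition.
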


\begin{proof}
To prove this, we have to show that every matrix in $\mathfrak{L}^G$ is the derivative of some smooth path in $\mathfrak{M}^G$, and the converse. Now, if $X\in \LL^G$, consider $A(t)\!=\!e^{Xt}$. Clearly, $A(0)\!=\!\id$ and $A'(0)\!=\!X$. Moreover, $\det(A(t))=e^{t Tr(X)}\neq 0$ and if $\sigma\in G$, so 
\begin{eqnarray*}
K_{\sigma} A(t) K_{\sigma} ^{-1} & = &  K_{\sigma} \left (\sum_{i\geq 0}\frac{1}{i!}X^i t^i\right) K_{\sigma} ^{-1}= \sum_{i\geq 0}\frac{1}{i!}t^i (K_{\sigma}X^iK_{\sigma} ^{-1} ) = \\
& = &  \sum_{i\geq 0}\frac{1}{i!} X^i t^i=A(t). 
\end{eqnarray*}
This shows that the path $A(t)\subset \mathfrak{M}^G$ and proves one inclusion. 
For the converse, let $A(t)$ be any smooth path in $\mathfrak{M}^G$ with $A(0)\!=\!\id$. For any $\sigma\in G$, we have 
\begin{eqnarray*}
K_{\sigma} A(t) K_{\sigma} ^{-1}=A(t). 
\end{eqnarray*}
Taking the derivative, we see that 
\begin{eqnarray*}
K_{\sigma} A'(0) K_{\sigma} ^{-1}=A'(0),
\end{eqnarray*}
and we infer that the tangent vector $A'(0)$ is in $\LL^G$. From this, the other inclusion follows. 
\end{proof}

\begin{prop}Considered as continuous-time models \tb{$e^{\LL^G}$}, the ``equivariant models'' are Lie Markov models. 
\end{prop}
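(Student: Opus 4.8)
The plan is to show that $\mathfrak{L}^G$, which the previous lemma identifies with the tangent space $T_{\id}(\mathfrak{M}^G)$, is a Lie algebra possessing a stochastic basis in the sense of Definition~\ref{stochasticbasis}; once this is established, $e^{\mathfrak{L}^G}$ is by definition a Lie Markov model. There are two things to check: that $\mathfrak{L}^G$ is closed under the Lie bracket and linear combination, and that it admits a basis consisting of convex combinations of the elementary rate-matrices $\{L_{ij}\}_{i\neq j}$.

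For the Lie algebra property, I would note that $\mathfrak{L}^G$ is already known to be a vector space (stated in the definition), so only closure under $[\cdot,\cdot]$ remains. This can be read off from the previous lemma together with the general fact (established earlier in the excerpt, right after Definition~\ref{def:tangentspace}) that the tangent space at the identity of any matrix group is closed under Lie brackets; alternatively, one checks it directly: if $K_\sigma Q_1 K_\sigma^{-1}=Q_1$ and $K_\sigma Q_2 K_\sigma^{-1}=Q_2$, then conjugating the commutator gives $K_\sigma[Q_1,Q_2]K_\sigma^{-1} = [K_\sigma Q_1K_\sigma^{-1}, K_\sigma Q_2 K_\sigma^{-1}] = [Q_1,Q_2]$, since conjugation is an algebra automorphism. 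So $[Q_1,Q_2]\in\mathfrak{L}^G$.

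The main work — and the part I expect to be the real obstacle — is exhibiting a stochastic basis. The idea is that $G$ acts on the index set $\{1,\ldots,n\}$ and hence on the pairs $\{(i,j): i\neq j\}$, and one forms the orbit sums $\widehat{L}_{\mathcal{O}} := \frac{1}{|\mathcal{O}|}\sum_{(i,j)\in\mathcal{O}} L_{ij}$ over each orbit $\mathcal{O}$ of this action. Each such $\widehat{L}_{\mathcal{O}}$ is manifestly a convex combination of the $L_{ij}$, and because $K_\sigma L_{ij} K_\sigma^{-1} = L_{\sigma(i)\sigma(j)}$ (a small computation with permutation matrices that I would record), each $\widehat{L}_{\mathcal{O}}$ is $G$-invariant, so lies in $\mathfrak{L}^G$. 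The orbit sums are linearly independent since distinct orbits involve disjoint sets of $L_{ij}$, and they span $\mathfrak{L}^G$: given any $Q = \sum_{i\neq j}\alpha_{ij}L_{ij}\in\mathfrak{L}^G$, the invariance $K_\sigma Q K_\sigma^{-1}=Q$ forces $\alpha_{ij}=\alpha_{\sigma(i)\sigma(j)}$ for all $\sigma\in G$, i.e. the coefficients are constant on orbits, hence $Q$ is a linear combination of the $\widehat{L}_{\mathcal{O}}$. Thus $\{\widehat{L}_{\mathcal{O}}\}$ is a basis of $\mathfrak{L}^G$ made of convex combinations of the $L_{ij}$, which is precisely a stochastic basis, and $e^{\mathfrak{L}^G}$ is a Lie Markov model. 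The only subtlety to be careful about is the bookkeeping between the action on ordered pairs and the fact that $L_{ij}$ is only defined for $i\neq j$ — but $G\leq\SG_n$ preserves the diagonal, so the off-diagonal pairs form a genuine $G$-set and no difficulty arises.
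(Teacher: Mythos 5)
Your proposal is correct, and it is actually more complete than the proof the paper gives. For the Lie-bracket closure your argument coincides with the paper's: both use that conjugation by $K_\sigma$ is an algebra automorphism, so $K_\sigma[X,Y]K_\sigma^{-1}=[K_\sigma XK_\sigma^{-1},K_\sigma YK_\sigma^{-1}]=[X,Y]$. Where you differ is that you go on to verify the second half of Definition~\ref{stochasticbasis} — the existence of a stochastic basis — which the paper's proof silently omits even though the definition of ``Lie Markov model'' requires it. Your orbit-sum construction $\widehat{L}_{\mathcal{O}}=\frac{1}{|\mathcal{O}|}\sum_{(i,j)\in\mathcal{O}}L_{ij}$ is exactly the right device: each $\widehat{L}_{\mathcal{O}}$ lies in the stochastic cone, is $G$-invariant by $K_\sigma L_{ij}K_\sigma^{-1}=L_{\sigma(i)\sigma(j)}$ (which is the paper's Proposition~\ref{prop:GMMsymm}), the $\widehat{L}_{\mathcal{O}}$ are independent because distinct orbits have disjoint supports, and they span because invariance of $Q=\sum\alpha_{ij}L_{ij}$ forces the coefficients to be constant on orbits (using linear independence of the $L_{ij}$). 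This is precisely the orbit decomposition the paper only introduces later, in the general scheme of \S\ref{sec5}; you have in effect supplied the missing half of the argument rather than taken a different route, and your version should be preferred.
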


\begin{proof}
Let $\sigma \in G$ and let $X,Y\in \LL^G$. Then, we have  
\beqn
K_{\sigma}\left[X,Y\right]K_{\sigma}^{-1} =K_{\sigma}\left(XY-YX\right)K^{-1}_\sigma &=K_{\sigma}XK_{\sigma}^{-1}K_{\sigma}YK_{\sigma}^{-1}-K_{\sigma}YK_{\sigma}^{-1}K_{\sigma}XK_{\sigma}^{-1}\\
&=XY-YX\\
&=\left[X,Y\right],\nonumber
\eqn
 which shows that, considered as continuous-time models, the rate-matrices of equivariant models form Lie algebras.  
\end{proof}

\private{Jesus: I would like to state that $GL(n,\mathbb{C})\cap \mathfrak{M}_G$ is a Lie group. I think this is true, and then, we can present $\LL_G$ as the Lie algebra associated to it.}

\private{Jeremy: Yes we should do this. I think there is some general statement that subgroup of Lie group defined by some polynomial constraints is necessarily a Lie subgroup. I will try to chase it up, there are a few places we could use this result.}

\private{Jesus: Now, I think it would be interesting to know if every equivariant matrix $M\in GL(n,\mathbb{C})\cap \mathfrak{M}_G$ is the exponential of some $Q\in \LL_G$. $exp$ and $log$ are reciprocal inverse in a neighbourhood of $id$, but it seems that the $log$ series doesn't converge for any matrix $M$. However, according to wiki, the map $exp: M_n(\CC)  \rightarrow GL(n,\CC)$ is surjective (I haven't find a better reference, sorry) and if so, the transition matrix should be given by the logarithm series. In this case, I'm almost sure that the answer to my question above would by "yes". As I told Jeremy, I think this is interesting enough to be stated independently}

\private{Jesus: The above paragraph is related to Jeremy's question on the convergence of the BCH formula. }

To every Markov model, we associate an oriented graph with $n$ vertices $\{v_1,v_2,\ldots,v_n\}$ in the following way: vertices represent the possible states, one node for each state. Each ordered pair of vertices $(v_i,v_j)$ is connected by an arrow from $v_i$ to $v_j$ with a \tb{parameter} $\alpha_{ij}$ representing the rate that state $v_i$ changes into state $v_j$. 
In this way, every single rate-matrix in the model  provides values for these parameters, with different matrices providing different values.

Given such a graph, say $\Gamma$, and a permutation $\sigma\in\SG_n$, fix an ordering in the set of vertices. Then there is a natural action of $\sigma$ on $\Gamma$ defined by mapping $v_i$ to $v_{\sigma(i)}$. 
Notice that this induces an action of $\mathfrak{S}_n$ in the set of ordered pairs of vertices 
\begin{eqnarray*}
\sigma:(v_i,v_j)\mapsto  (v_{\sigma(i)},v_{\sigma(j)}),
\end{eqnarray*}
and we infer there is a group homomorphism
\begin{eqnarray*}
\rho:\mathfrak{S}_n \rightarrow \mathfrak{S}_{n(n-1)}.
\end{eqnarray*}
Under the action of such a permutation, we denote the new graph by $\sigma \Gamma$, and $\Gamma $ is said to be \emph{invariant} under $\sigma$ if $\sigma \Gamma= \Gamma$. This occurs if and only if
\begin{eqnarray}\label{rate}
\alpha_{ij}=\alpha_{\sigma(i)\sigma(j)}  \quad \forall v_i,v_j.
\end{eqnarray}

\begin{prop}\label{prop:GMMsymm}
If $K_{\sigma}$ is the $n\times n$ permutation matrix representing $\sigma\in \SG_n$ in the standard basis, the homomorphism $\rho:\SG_n\rightarrow \SG_{n(n-1)}$ described above in the case of $\mathfrak{L}_{GMM}$ is given by: 
\beqn\label{eq:GMMaction}
K_\sigma L_{ij}K_\sigma^{-1}=L_{\sigma(i)\sigma(j)}.
\eqn
\end{prop}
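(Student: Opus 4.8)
The plan is to compute directly how the permutation matrix $K_\sigma$ conjugates the elementary matrix $E_{ij}$, and then transfer that formula to the elementary rate-matrices $L_{ij}=E_{ij}-E_{jj}$. First I would recall that $K_\sigma$ is defined by its action on the standard basis vectors, $K_\sigma e_j = e_{\sigma(j)}$, equivalently $(K_\sigma)_{kl}=\delta_{k,\sigma(l)}$, so that $K_\sigma^{-1}=K_{\sigma^{-1}}=K_\sigma^T$. The key elementary observation is that $K_\sigma E_{ij} K_\sigma^{-1}$ acts on $e_l$ by first sending $e_l \mapsto K_\sigma^{-1}e_l = e_{\sigma^{-1}(l)}$, then $E_{ij}$ sends this to $\delta_{j,\sigma^{-1}(l)}\,e_i$, and finally $K_\sigma$ sends $e_i \mapsto e_{\sigma(i)}$; so the composite sends $e_l$ to $\delta_{\sigma(j),l}\,e_{\sigma(i)}$, which is exactly the action of $E_{\sigma(i)\sigma(j)}$. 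Hence
\begin{eqnarray*}
K_\sigma E_{ij} K_\sigma^{-1}=E_{\sigma(i)\sigma(j)}.
\end{eqnarray*}
(Equivalently one checks the matrix entries: $(K_\sigma E_{ij}K_\sigma^{-1})_{kl}=\sum_{p,q}(K_\sigma)_{kp}(E_{ij})_{pq}(K_\sigma^{-1})_{ql}=\delta_{k,\sigma(i)}\delta_{l,\sigma(j)}$.)

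With this in hand the rest is a one-line substitution: by linearity of conjugation,
\begin{eqnarray*}
K_\sigma L_{ij} K_\sigma^{-1}=K_\sigma(E_{ij}-E_{jj})K_\sigma^{-1}=E_{\sigma(i)\sigma(j)}-E_{\sigma(j)\sigma(j)}=L_{\sigma(i)\sigma(j)},
\end{eqnarray*}
which is the claimed formula \eqref{eq:GMMaction}. One should also note that since $i\neq j$ implies $\sigma(i)\neq\sigma(j)$, the right-hand side is again a legitimate elementary rate-matrix, so the map permutes the index set $\{(i,j):i\neq j\}$ as asserted and indeed realizes the homomorphism $\rho:\SG_n\to\SG_{n(n-1)}$ defined just before the statement via the induced action $\sigma:(v_i,v_j)\mapsto(v_{\sigma(i)},v_{\sigma(j)})$.

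There is really no substantial obstacle here; the only thing to be careful about is the bookkeeping of $\sigma$ versus $\sigma^{-1}$ in the definition of the permutation matrix, i.e.\ making sure the convention $K_\sigma e_j=e_{\sigma(j)}$ is used consistently so that conjugation by $K_\sigma$ (not $K_\sigma^{-1}$) produces $L_{\sigma(i)\sigma(j)}$ rather than $L_{\sigma^{-1}(i)\sigma^{-1}(j)}$. If one prefers, the entire computation can be phrased in terms of matrix entries using $(K_\sigma)_{kl}=\delta_{k,\sigma(l)}$, which makes the verification purely mechanical and sidesteps any ambiguity. I would present the $E_{ij}$ identity as the crux and then deduce the $L_{ij}$ identity as an immediate corollary, perhaps remarking that this is the natural statement that the $\SG_n$-action on rate-matrices by conjugation is just the relabelling of states.
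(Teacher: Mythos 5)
Your proof is correct and takes essentially the same approach as the paper: a direct computation of the conjugation action on the standard basis, with careful bookkeeping of $\sigma$ versus $\sigma^{-1}$. The only (cosmetic) difference is that the paper conjugates $L_{ij}$ directly via the identity $L_{ij}e_k=\delta_{jk}(e_i-e_k)$, whereas you first establish $K_\sigma E_{ij}K_\sigma^{-1}=E_{\sigma(i)\sigma(j)}$ and then deduce the claim by linearity from $L_{ij}=E_{ij}-E_{jj}$.
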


\begin{proof}
Consider the standard basis vectors $e_1,e_2,\ldots ,e_n$ of $\mathbb{C}^n$ with $K_\sigma e_i=e_{\sigma(i)}$ for all $\sigma\in\SG_4$.
It is easy to check that $L_{ij}e_k=\delta_{jk}(e_i-e_k)$ so $L_{\sigma(i)\sigma(j)}e_k=\delta_{\sigma(j)k}(e_{\sigma(i)}-e_k)$.
On the other hand $K_{\sigma}L_{ij}K^{-1}_{\sigma}e_k=K_\sigma L_{ij}e_{\sigma^{-1}(k)}=\delta_{j\sigma^{-1}(k)}\left(e_{\sigma(i)}-e_k\right)$.
Confirming that $\delta_{\sigma(j)k}=\delta_{j\sigma^{-1}(k)}$ for all $j$, $k$ and $\sigma\in \SG_4$ completes the proof.
\end{proof}

\begin{lem}
Let $\Gamma$ be the graph associated to a Markov model $\mathfrak{L}=\lrc{ L_1,\ldots,L_d}$, and let $\sigma \in \SG_4$ be a permutation. 
Then, $\Gamma$ is invariant under $\sigma$ if and only if $K_{\sigma}L_iK_{\sigma}^{-1}=L_i$, $\forall i\in \{1,2,\ldots,d\}$.
\end{lem}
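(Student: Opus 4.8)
The plan is to reduce the statement to an identity among the coefficient functionals of rate-matrices in the basis $\{L_{ij}\}_{i\neq j}$, and then to read off the conjugation condition using Proposition~\ref{prop:GMMsymm}. First I would fix notation: by Lemma~\ref{lem:GMMtangent} the $L_{ij}$ are linearly independent, so every $Q\in\mathfrak{L}_{GMM}$ has a unique expansion $Q=\sum_{i\neq j}\alpha_{ij}(Q)L_{ij}$, and each $\alpha_{ij}(\cdot)$ is a well-defined linear functional on $\mathfrak{L}_{GMM}$. The graph $\Gamma$ attached to the model $\mathfrak{L}=\lrc{L_1,\ldots,L_d}$ records, on the arrow $(v_i,v_j)$, precisely the restriction of $\alpha_{ij}$ to $\mathfrak{L}$; so by (\ref{rate}), ``$\Gamma$ is invariant under $\sigma$'' means that $\alpha_{ij}=\alpha_{\sigma(i)\sigma(j)}$ \emph{as functions on $\mathfrak{L}$}, for every $i\neq j$. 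Because these functionals are linear and $\mathfrak{L}$ is spanned by $L_1,\ldots,L_d$, this is in turn equivalent to the numerical identities $\alpha_{ij}(L_k)=\alpha_{\sigma(i)\sigma(j)}(L_k)$ holding for every $k\in\{1,\ldots,d\}$ and every $i\neq j$.

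Next I would do the one-line computation for a single basis vector. Fix $k$ and abbreviate $\alpha^{(k)}_{ij}:=\alpha_{ij}(L_k)$, so that $L_k=\sum_{i\neq j}\alpha^{(k)}_{ij}L_{ij}$. Using linearity of conjugation and $K_\sigma L_{ij}K_\sigma^{-1}=L_{\sigma(i)\sigma(j)}$ from Proposition~\ref{prop:GMMsymm}, and then reindexing by the bijection $(p,q)=(\sigma(i),\sigma(j))$ of ordered pairs with distinct entries,
\[
K_\sigma L_k K_\sigma^{-1}=\sum_{i\neq j}\alpha^{(k)}_{ij}\,L_{\sigma(i)\sigma(j)}=\sum_{p\neq q}\alpha^{(k)}_{\sigma^{-1}(p)\sigma^{-1}(q)}\,L_{pq}.
\]
Comparing this with $L_k=\sum_{p\neq q}\alpha^{(k)}_{pq}L_{pq}$ and invoking linear independence of the $L_{pq}$, we conclude that $K_\sigma L_k K_\sigma^{-1}=L_k$ if and only if $\alpha^{(k)}_{\sigma^{-1}(p)\sigma^{-1}(q)}=\alpha^{(k)}_{pq}$ for all $p\neq q$, which upon substituting $p=\sigma(i)$, $q=\sigma(j)$ is exactly $\alpha^{(k)}_{ij}=\alpha^{(k)}_{\sigma(i)\sigma(j)}$ for all $i\neq j$.

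Finally I would assemble the pieces: $K_\sigma L_i K_\sigma^{-1}=L_i$ for all $i\in\{1,\ldots,d\}$ holds $\iff$ the coefficient identity $\alpha^{(k)}_{ij}=\alpha^{(k)}_{\sigma(i)\sigma(j)}$ holds for all $k$ and all $i\neq j$ $\iff$ $\Gamma$ is invariant under $\sigma$, which is the claim. I expect no serious obstacle here; the only point requiring care is the translation in the first step — that ``$\Gamma$ invariant under $\sigma$'' is the statement that (\ref{rate}) holds as an identity of coefficient functionals on the whole model (equivalently on a spanning set), not as a pointwise condition on some fixed rate-matrix. After that, linear independence of $\{L_{ij}\}_{i\neq j}$ together with Proposition~\ref{prop:GMMsymm} does all the work, and the only thing to watch is keeping $\sigma$ and $\sigma^{-1}$ straight in the reindexing.
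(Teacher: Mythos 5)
Your proof is correct and follows essentially the same route as the paper's: expand each basis element in the $\{L_{ij}\}$ basis, apply Proposition~\ref{prop:GMMsymm}, reindex, and compare coefficients using linear independence, with graph invariance translated into the coefficient identity (\ref{rate}). Your version is a bit more explicit about the $\sigma$ versus $\sigma^{-1}$ bookkeeping and about why it suffices to check the condition on a spanning set, but the substance is identical.
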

\begin{proof}
Regarding the model $\mathfrak{L}$ as a vector subspace of $\mathfrak{L}_{GMM}$, the coordinates of the rate-matrices in $\mathfrak{L}$ in the basis $\{L_{ij}\}$ are just the mutation rates $\alpha_{ij}$: 
\begin{eqnarray}
Q=\sum_{i\neq j}\alpha_{ij}L_{ij}.\nonumber
\end{eqnarray}
Now, we have 
\begin{eqnarray*}
K_{\sigma}  Q K_{\sigma}^{-1}= \sum_{i\neq j}\alpha_{ij}K_{\sigma} L_{ij} K_{\sigma}^{-1}  = \sum_{i\neq j}\alpha_{ij} L_{\sigma(i)\sigma(j)}= \sum_{i\neq j}\alpha_{\sigma^{-1}(i)\sigma^{-1}(j)} L_{ij}.
\end{eqnarray*}
\private{Jeremy: Shouldn't it be $K_{\sigma} L_{ij} K_{\sigma}^{-1}=L_{\sigma(i)\sigma(j)}$??? I have changed it to this.}
By virtue of (\ref{rate}), it becomes clear that the invariance of the graph under $\sigma$ is equivalent to the invariance of every single rate-matrix of the model under $\sigma$. In particular, the invariance of the graph implies the invariance of $\{L_1,\ldots,L_d\}$ and this proves one implication. 
The other implication follows by using that $\{L_1,\ldots,L_d\}$ is a basis for $\mathfrak{L}$, so their invariance under $\sigma$ implies the invariance of every matrix in $\mathfrak{L}$. 
\end{proof}

\private{OLD Jeremy: I'm not completely convinced by the proof of this lemma. Is it possible to make it a little more clear? Jesus: Ok, I hope it's more clear now. Am I missing something? }
\private{Jeremy: OK, I think I get it now. But we have to sort out $\sigma$ vs $\sigma^{-1}$.}
\private{Jeremy: I have replaced ``evolutionary model'' above with ``Markov model''.}

\begin{res}\label{cor:equivariant}
Considered as continuous-time models \tb{$e^{\LL^G}$}, the equivariant condition corresponds exactly to the invariance of the graph associated to the model. 
ie. $\sigma\Gamma=\Gamma$ for all $\sigma\in G$.
\end{res}

\subsection{Statistical considerations}

\private{Jeremy: We need to let the reader know that we are back in the main discussion now. It's confusing that the following couple of paragraphs are jammed inbetween the equivariant and group-based subsection. I'm just not sure what to call this subsection!}

Corollary~\ref{cor:equivariant} nicely characterizes the equivariant models; a class of models that has allowed a number of already existing models to be analyzed simultaneously (for example in the work of \citet{draisma2008} and \citet{casanellas2010}).
However, from a statistical point of view, invariance of the graph is not particularly well motivated as a symmetry of a model.
Naturally, as the parameters of a model are ``free'', and hence need to be fitted using some statistical method and observed data, it follows that whatever inference method is used in practice, there will be no change to the outcome if the parameters themselves are permuted.

To make this statement more precise, suppose $F$ represents the maximum likelihood method of fitting a given (without loss of generality) two-dimensional continuous-time Markov model with rate-matrix $Q=\alpha_1 L_1+\alpha_2 L_2$ to a binary tree $\mathcal{T}$ with edge weights $\theta$ given some data $\mathfrak{D}$.
Thus $F$ is a function that returns (via optimization) maximum likelihood estimates of the free parameters in the model, i.e. $F(\mathfrak{D})\!=\!(\hat{\alpha_1},\hat{\alpha_2},\hat{\theta})$.
Now, if we permute the rate parameters in $Q$ by defining $Q'\!=\!\alpha_2 L_1+\alpha_1 L_2$, we claim that the corresponding modified function $F'$ will return precisely the same maximum likelihood estimates as $F$, i.e. $F'(\mathfrak{D})\!=\!(\hat{\alpha_1},\hat{\alpha_2},\hat{\theta})$. 
This is because the difference in the two is in parameter labels only and the optimization routine performed to find the maximum likelihood estimates will be unaffected by this.
This observation generalizes immediately to models with a greater number of free parameters and leads to the following characterization of symmetry:

\begin{mydef}\label{def:symm}
We say that a Lie Markov model $\mathfrak{L}$  has the \emph{symmetry} of the group $G\leq\SG_n$ if there is a basis $B_{\mathfrak{L}}=\{L_1,L_2,\ldots, L_d\}$ of $\LL$ such that 
\beqn
\sigma\cdot B_{\mathfrak{L}}:=\left\{K_{\sigma}L_{1}K^{-1}_{\sigma},K_{\sigma}L_{2}K^{-1}_{\sigma},\ldots ,K_{\sigma}L_{d}K^{-1}_{\sigma}\right\}=B_{\mathfrak{L}},\quad\forall \sigma\in G,\nonumber
\eqn 
and $G$ is the largest subgroup of $\SG_n$ with this property.
\end{mydef}

This definition means that $G$ acts by simply permuting the elements of a basis $B_{\mathfrak{L}}$ . 
Crucially, once such a basis is fixed, there is a group homomorphism $\rho:G\leq \mathfrak{S}_n \rightarrow \mathfrak{S}_d$, where $d$ is the dimension of the model.
That is, for all $\sigma\in G$ and $1\leq i \leq d$, we have
\beqn
K_{\sigma}L_{i}K^{-1}_{\sigma}=L_{\rho(\sigma)(i)},\nonumber
\eqn
where $\rho(\sigma)\in \SG_d$.
Also, the definition means that $\mathfrak{L}$ is invariant when considered as a \emph{vector space}: 
\beqn
\mathfrak{L}=\lrc{L_1,L_2,\ldots, L_d}\mapsto \sigma\cdot \mathfrak{L}:= \lrc{K_{\sigma}L_{1}K^{-1}_{\sigma},K_{\sigma}L_{2}K^{-1}_{\sigma},\ldots ,K_{\sigma}L_{d}K^{-1}_{\sigma}}=\mathfrak{L}.\nonumber
\eqn
However this is weaker than the condition given in the definition and therefore should not be seen as equivalent\footnote{Although the weaker vector space condition is consistent with the statistical motivations outlined above, we use the stronger condition given in Definition~\ref{def:symm} as it greatly simplifies the search for Lie Markov models.}.

For nucleotide models with ``maximal'' symmetry $\SG_4$, we see that, as measured by the model itself, there is no way of placing the nucleotides into any preferred groupings.
Indeed, any statistical inference method using such models will return the same answer no matter how the nucleotides are permuted (because such a permutation can be accounted for by a corresponding permutation of parameters).
It is somewhat surprising to learn (for the authors at least) that the largest symmetry of the Kimura 3ST model is $\SG_4$ itself (see \S\ref{sec6} below).

\begin{res}
The general $n$-state Markov Lie algebra $\LL_{GMM}$ has $\SG_n$ symmetry.
\end{res}
\begin{proof}
$\LL_{GMM}$ has basis $B_{GMM}=\{L_{ij}\}_{i\neq j}$.
Recalling Proposition~\ref{prop:GMMsymm}, we see that
\beqn
\sigma\cdot B_{GMM}=\{L_{\sigma(i)\sigma(j)}\}_{i\neq j}=\{L_{ij}\}_{i\neq j}=B_{GMM},\quad\forall\sigma\in\SG_n.\nonumber
\eqn
\end{proof}

\subsection{Group-based models}
Before proceeding to our general scheme, we end this section by showing that the group-based models are examples of Lie Markov models.

Given an abelian group $G$ of order $\left|G\right|=n$, a group-based model is defined by considering the $n$ group elements as the states of a continuous-time Markov chain with non-diagonal rate-matrix elements $\left[Q\right]_{ab}$ depending only on the difference $b-a$ (where the group operation is written as addition and its inverse is written as subtraction).
That is, we can write $\left[Q\right]_{ab}=\left[Q\right]_{b-a}$ for all $a,b\in G$.
As the possible values of the differences $b-a$ covers the whole of the group, by setting $\sigma=b-a$ we see that these models have $n$ free parameters $\alpha_{\sigma}:=\left[Q\right]_{ab}=\left[Q\right]_{b-a}$.
The interest in these group-based models is that they give rise to models with inherent symmetry (for example the Kimura 3ST model is exactly the group-based model produced by taking the abelian group $G=\mathbb{Z}_2\times \mathbb{Z}_2<\SG_4$).

Presently we will show that the group-based models form Lie Markov models by using standard results from the representation theory of finite groups.
We recommend \citet{sagan2001} as an elementary text for the reader unfamiliar with the basic theory.

\begin{mydef}
A \emph{representation} of a group $G$ is a map $\rho:G\rightarrow GL(V)\cong GL(m,\mathbb{C})$, where $V\cong \mathbb{C}^m$ and $\rho(g_1g_2)=\rho(g_1)\rho(g_2)$ for all $g_1,g_2\in G$.
We say that $\rho$ provides an \emph{action} of $G$ on the vector space $V$, and that $V$ forms a $G$-\emph{module} (or simply a \emph{module} when $G$ is clear from the context).
\end{mydef}

Given a rate-matrix $Q$ taken from a group-based model, consider the matrix derivatives:
\beqn
L_\sigma:=\left.\frac{\partial Q}{\partial\alpha_\sigma}\right|_{\underline{\alpha}=0}.\nonumber
\eqn
Notice that we can write $Q=\sum_{\sigma\in G}\alpha_\sigma L_\sigma$, and observe that $L_\sigma$ has matrix elements
\beqn\label{eq:Lelements}
\left[L_{\sigma}\right]_{ab}=\left\{
\begin{array}{l}
	1\text{, if }b-a=\sigma\\
	-1\text{, if }b-a=0\\
	0\text{, otherwise.}\\
\end{array}\right.
\eqn

\begin{mydef}\label{def:regrep}
The \emph{regular representation} of a group $G$ with elements $\{\sigma_1,\sigma_2,\ldots,\sigma_n\}$ is defined by taking the $n$-dimensional $G$-module 
\begin{eqnarray}
\lrc{G}=\lrc{\sigma_1,\sigma_2,\ldots,\sigma_n}=\{v=v_1\sigma_1+v_2\sigma_2+\ldots +v_n\sigma_n : c_i\in \mathbb{C}\},\nonumber
\end{eqnarray}
and group action defined as $v\mapsto \sigma\cdot v=v_1(\sigma\sigma_1)+v_2(\sigma\sigma_2)+\ldots+v_n(\sigma\sigma_n)$.
\end{mydef}

As, from Cayley's theorem, a group acts on itself by permutation, it is clear that the regular representation gives rise to permutation matrices. 
As we did in the case of permutation groups, we will denote the permutation matrix corresponding to $\sigma \in G$ by $K_{\sigma}$, so that the map $v \mapsto \sigma v$ is written as 
\begin{eqnarray}
v=\left(
\begin{array}{c}
  v_1  \\ v_2   \\ \ldots \\ v_n   
\end{array}
\right )
 \mapsto 
K_{\sigma} \left(
\begin{array}{c}
  v_1  \\ v_2   \\ \ldots \\ v_n   
\end{array}
\right )\nonumber
\end{eqnarray}

If we label the entries of $K_{\sigma}$ by the elements of the group $G$, it follows that
\beqn\label{eq:Kelements}
\left[K_{\sigma}\right]_{ab}=\left\{
\begin{array}{l}
	1\text{, if }b-a=\sigma\\
	0\text{, otherwise.}\\
\end{array}\right.
\eqn
Comparing (\ref{eq:Kelements}) and (\ref{eq:Lelements}) it is clear that:
\begin{lem}\label{lem:grouptangents}
The tangent vectors of a group-based model are given by $L_{\sigma}=-\id+K_{\sigma}$.  
\end{lem}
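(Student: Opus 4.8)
The plan is to compare matrix entries directly using the two formulas already derived in the excerpt. Recall that equation~(\ref{eq:Lelements}) gives the entries of $L_\sigma$ (the matrix derivative $\partial Q/\partial\alpha_\sigma$ at $\underline\alpha=0$), namely $\left[L_\sigma\right]_{ab}=1$ if $b-a=\sigma$, $\left[L_\sigma\right]_{ab}=-1$ if $b-a=0$ (equivalently $a=b$), and $\left[L_\sigma\right]_{ab}=0$ otherwise; and equation~(\ref{eq:Kelements}) gives $\left[K_\sigma\right]_{ab}=1$ if $b-a=\sigma$ and $0$ otherwise. Since the identity matrix has entries $\left[\id\right]_{ab}=1$ if $a=b$ and $0$ otherwise, the claim $L_\sigma=-\id+K_\sigma$ is just the entrywise identity $\left[L_\sigma\right]_{ab}=-\left[\id\right]_{ab}+\left[K_\sigma\right]_{ab}$.

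First I would fix a pair of group elements $a,b\in G$ and split into the relevant cases. The only subtlety is whether $\sigma$ can equal the identity element $0$ of $G$: if $\sigma=0$, then the conditions ``$b-a=\sigma$'' and ``$b-a=0$'' coincide, and indeed $L_0$ would be the zero matrix, consistent with $-\id+K_0=-\id+\id=0$; so without loss of generality one takes $\sigma\neq 0$, which is the case of interest since $L_\sigma$ for $\sigma$ ranging over the nonidentity elements parametrizes the model. With $\sigma\neq 0$ the three defining cases of $\left[L_\sigma\right]_{ab}$ are mutually exclusive: if $b-a=\sigma$ then $a\neq b$, so $\left[K_\sigma\right]_{ab}=1$ and $\left[\id\right]_{ab}=0$, giving $-\left[\id\right]_{ab}+\left[K_\sigma\right]_{ab}=1=\left[L_\sigma\right]_{ab}$; if $a=b$ then $b-a=0\neq\sigma$, so $\left[K_\sigma\right]_{ab}=0$ and $\left[\id\right]_{ab}=1$, giving $-1=\left[L_\sigma\right]_{ab}$; and in the remaining case $\left[K_\sigma\right]_{ab}=\left[\id\right]_{ab}=0=\left[L_\sigma\right]_{ab}$.

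This is a routine verification rather than a substantive argument, so there is no real obstacle; the only thing to be slightly careful about is keeping the bookkeeping of the group operation (written additively) straight and noting that the three cases in~(\ref{eq:Lelements}) are genuinely exhaustive and pairwise disjoint once $\sigma\neq 0$. I would therefore present the proof as a one-line entrywise comparison of~(\ref{eq:Lelements}) with~(\ref{eq:Kelements}) and the definition of $\id$, exactly as the sentence preceding the lemma in the excerpt already anticipates (``Comparing (\ref{eq:Kelements}) and (\ref{eq:Lelements}) it is clear that\ldots'').
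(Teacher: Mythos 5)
Your proposal is correct and is exactly the argument the paper intends: the lemma is stated immediately after the sentence ``Comparing (\ref{eq:Kelements}) and (\ref{eq:Lelements}) it is clear that\ldots'', so the paper's own justification is precisely the entrywise comparison you carry out. Your additional remark that the case $\sigma=0$ is degenerate but still consistent ($L_0=0=-\id+K_0$) is a harmless and accurate refinement of the same verification.
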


Observing that $L_{\sigma}=-\id+K_{\sigma}$ is a rate-matrix for \emph{any} $\sigma\in G$, Lemma~\ref{lem:grouptangents} allows us to extend the concept of group-based models to the non-abelian case:
\begin{mydef}
Given a (possibly non-abelian) group $G$, the corresponding \emph{group-based} model is given by $\mathfrak{L}_{G}:=\lrc{\{L_\sigma\}_{\sigma\in G}}$, where $L_\sigma=-\id+K_\sigma$ and $K_\sigma$ is the permutation matrix representing $\sigma$ in the regular representation (see Definition~\ref{def:regrep}).
\end{mydef}

\begin{prop}
The group-based model  $\mathfrak{L}_{G}=\lrc{\{L_\sigma\}_{\sigma\in G}}$ is a Lie Markov model.
\end{prop}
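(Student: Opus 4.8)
The plan is to verify the two closure conditions for a Lie algebra directly on the spanning set $\{L_\sigma\}_{\sigma\in G}$, using the explicit formula $L_\sigma=-\id+K_\sigma$ from Lemma~\ref{lem:grouptangents}, and then check that the resulting subspace of $\mathfrak{L}_{GMM}$ admits a stochastic basis in the sense of Definition~\ref{stochasticbasis}. Closure under linear combinations is automatic since $\mathfrak{L}_G$ is defined as a $\mathbb{C}$-span. So the real content is the bracket computation.

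First I would compute $[L_\sigma,L_\tau]$ for $\sigma,\tau\in G$. Substituting $L_\sigma=-\id+K_\sigma$ and using that $\id$ is central, all the cross terms involving $\id$ cancel and we are left with $[L_\sigma,L_\tau]=[K_\sigma,K_\tau]=K_\sigma K_\tau-K_\tau K_\sigma$. Since the $K$'s furnish the regular representation, $K_\sigma K_\tau=K_{\sigma\tau}$, so $[L_\sigma,L_\tau]=K_{\sigma\tau}-K_{\tau\sigma}$. Now I rewrite each $K$ in terms of an $L$: $K_{\sigma\tau}-K_{\tau\sigma}=(\id+L_{\sigma\tau})-(\id+L_{\tau\sigma})=L_{\sigma\tau}-L_{\tau\sigma}$. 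Hence
\beqn
\lie{L_\sigma}{L_\tau}=L_{\sigma\tau}-L_{\tau\sigma},\nonumber
\eqn
which manifestly lies in $\mathfrak{L}_G=\lrc{\{L_\sigma\}_{\sigma\in G}}$. (In the abelian case this bracket vanishes, recovering the familiar fact that group-based models are abelian Lie algebras, but the argument works verbatim for non-abelian $G$.) This establishes that $\mathfrak{L}_G$ is a Lie subalgebra of $\mathfrak{L}_{GMM}$.

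It remains to confirm $\mathfrak{L}_G$ is genuinely a \emph{Lie Markov} model, i.e. that it has a stochastic basis. Here I would argue as follows. The elements $\{L_\sigma\}_{\sigma\in G, \sigma\neq e}$ are linearly independent (they correspond, via \eqref{eq:Lelements}, to distinct permutation patterns), and $L_e=-\id+K_e=0$, so a basis of $\mathfrak{L}_G$ is $\{L_\sigma\}_{\sigma\in G,\sigma\neq e}$. Each such $L_\sigma$ has off-diagonal entries equal to $1$ exactly when $b-a=\sigma$ and $0$ otherwise, so from the description in \S\ref{sec2}, $L_\sigma=\sum_{b-a=\sigma}L_{ab}$ is a nonnegative (indeed $0/1$) combination of the elementary rate-matrices $L_{ij}$. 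That is precisely the condition in Definition~\ref{stochasticbasis}, so $e^{\mathfrak{L}_G}$ is a Lie Markov model.

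I do not anticipate a genuine obstacle: the whole proof is a two-line bracket manipulation plus a bookkeeping check on the stochastic cone. The only point requiring minor care is the identification $K_\sigma K_\tau=K_{\sigma\tau}$ (rather than $K_{\tau\sigma}$), which depends on the left-multiplication convention fixed in Definition~\ref{def:regrep}; one should state this convention explicitly so the reader can confirm the bracket formula has the right form. A secondary subtlety is noting $L_e=0$ so that the spanning set $\{L_\sigma\}_{\sigma\in G}$ is not a basis — one passes to $\{L_\sigma\}_{\sigma\neq e}$ — but this does not affect the closure arguments, only the statement of the dimension ($|G|-1$).
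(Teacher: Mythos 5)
Your proof is correct and its core computation --- $\lie{L_\sigma}{L_\tau}=\lie{K_\sigma}{K_\tau}=K_{\sigma\tau}-K_{\tau\sigma}=L_{\sigma\tau}-L_{\tau\sigma}$ --- is exactly the paper's argument. You go slightly further than the paper by explicitly verifying the stochastic-basis clause of Definition~\ref{stochasticbasis} (writing each $L_\sigma=-\id+K_\sigma$ as a $0/1$ combination of the elementary rate-matrices $L_{ij}$, which the paper only remarks on before the proposition) and by noting $L_e=0$ so the basis is $\{L_\sigma\}_{\sigma\neq e}$ of size $|G|-1$; both observations are correct and make the argument more complete.
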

\begin{proof}
Consider the commutator of two tangent vectors arising from a group-based model:
\beqn
\lie{L_\sigma}{L_{\sigma'}}=\lie{-\id+K_{\sigma}}{-\id+K_{\sigma'}}=\lie{K_{\sigma}}{K_{\sigma'}}=K_{\sigma\sigma'}-K_{\sigma'\sigma}=L_{\sigma\sigma'}-L_{\sigma'\sigma}.\nonumber
\eqn
Thus the tangent vectors of a group-based model are closed under the operation of taking Lie brackets and hence form a Lie algebra.
\end{proof}

\section{General scheme \tb{for producing Lie Markov models}}\label{sec5}

Recall that in \S\ref{sec3} we observed that there is an $\mathbb{C}\mathbb{P}_1$ continuum of two-state, one-dimensional Lie Markov models.
This is an early indication that classifying arbitrary Lie Markov models is a difficult task. 
On the one hand, the problem seems to simply be to find all sub-algebras of $\mathfrak{L}_{GMM}$; however it must be kept in mind that we also need to ensure that these sub-algebras have a stochastic basis (as in Definition~\ref{stochasticbasis}).
We cannot use the classical Killing-Cartan-Dynkin classification of semi-simple Lie algebras (see \citet{erdmann2006}), as these results rely on isomorphism classes over $\mathbb{C}$ and this is completely incompatible with the concept of a stochastic basis. 
Thus producing a classification of Lie Markov models appears to be rather non-trivial and would rely on careful considerations of the geometry of the Lie bracket operation when restricted to a stochastic cone (cf. Definition~\ref{stochasticbasis}).

However, we also learnt in \S\ref{sec3} that the search for Lie Markov models can be significantly simplified by demanding that the models have symmetry (this successively reduced an infinite continuum of two-dimensional models to just two special cases).
In what follows we rely heavily on using symmetry to assist in the search for Lie Markov models.
Of course, it is expected that the larger the symmetry we demand, the easier the analysis will be. 

\subsection{Background on Group representation theory }
In what follows we recall and implement basic results from the representation theory of the symmetric group $\SG_n$.
Again, we recommend \citet{sagan2001} as an excellent introduction to the required material. 

Recall that a \emph{partition} of $n$ is a sequence of non-negative integers $\lambda\!=\!(\lambda_1,\lambda_2,\ldots ,\lambda_r)$, where $\lambda_{i}\geq \lambda_{i+1}$ for all $1\leq i\leq r$ and $\sum_{i=1}^r\lambda_i=n$.
We sometimes write $\lambda=\{\lambda_1^{n_1}\lambda_2^{n_2}\ldots \lambda_s^{n_s}\}$ to denote the partition that has $n_i$ copies of the integer $\lambda_i$, $1\leq i\leq s$.
For example, $\lambda=(5,5,4,2,2,1)\!=\!\{5^242^21\}$ is a partition of 19.

Recall also that a representation is said to be \emph{irreducible} if it does not contain any $G$- submodule. 

\begin{lem}
The irreducible representations of $\SG_n$ are in one-to-one correspondence with the partitions of $n$.
\end{lem}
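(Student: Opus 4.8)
The plan is to combine a dimension-counting argument with an explicit construction. On the one hand, standard character theory over $\CC$ tells us that the number of inequivalent irreducible representations of any finite group equals its number of conjugacy classes; on the other hand, I would exhibit, for each partition $\lambda$ of $n$, a concrete irreducible $\SG_n$-module (the Specht module $S^\lambda$) and show these are pairwise non-isomorphic. Matching the two counts then forces the $S^\lambda$ to constitute a complete, irredundant list of irreducibles, giving the claimed bijection $\lambda \leftrightarrow S^\lambda$.

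First I would pin down the counting side. Two permutations in $\SG_n$ are conjugate if and only if they have the same cycle type, and the cycle type of a permutation is precisely a partition of $n$ (the multiset of cycle lengths, written in weakly decreasing order). Hence the conjugacy classes of $\SG_n$ are indexed by partitions of $n$, and there are exactly $p(n)$ of them. By Maschke's theorem together with the orthogonality relations (both valid since we work over $\CC$, which has characteristic $0$), $\SG_n$ therefore has exactly $p(n)$ inequivalent irreducible representations.

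Next I would carry out the construction. For a partition $\lambda=(\lambda_1,\ldots,\lambda_r)$ fix a Young diagram and consider the Young tableaux of shape $\lambda$. Let $\SG_n$ act on the set of \emph{tabloids} (tableaux modulo row permutations); the $\CC$-span of the tabloids is the permutation module $M^\lambda$, which is the module induced from the trivial representation of the Young subgroup $\SG_{\lambda_1}\times\cdots\times\SG_{\lambda_r}$. For each tableau $t$ define the \emph{polytabloid} $e_t=\sum_{\pi}\operatorname{sgn}(\pi)\,\pi\{t\}$, the sum taken over the column stabiliser of $t$, and let the Specht module $S^\lambda\subseteq M^\lambda$ be the $\CC$-span of all polytabloids; it is an $\SG_n$-submodule because $\sigma e_t=e_{\sigma t}$. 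Equip $M^\lambda$ with the $\SG_n$-invariant symmetric bilinear form for which the tabloids are orthonormal. Then: (i) by the submodule theorem, every submodule $U\subseteq M^\lambda$ satisfies either $S^\lambda\subseteq U$ or $U\subseteq (S^\lambda)^{\perp}$; (ii) since in characteristic $0$ the form is non-degenerate on $S^\lambda$, we have $S^\lambda\cap(S^\lambda)^{\perp}=0$, so $S^\lambda$ itself is irreducible; (iii) a dominance-order argument shows $S^\lambda$ embeds in $M^\mu$ only when $\mu$ dominates $\lambda$, whence $S^\lambda\cong S^\mu\Rightarrow\lambda=\mu$. Thus $\{S^\lambda : \lambda\vdash n\}$ is a list of $p(n)$ pairwise non-isomorphic irreducibles, hence by the count above it is all of them. (Full details may be found in \citet{sagan2001}.)

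The main obstacle is part (i), the submodule theorem: one must show that for any submodule $U\subseteq M^\lambda$ and any polytabloid $e_t$, applying a suitable signed sum over column permutations to an element of $U$ returns a scalar multiple of $e_t$ lying in $U$. This rests on the combinatorial lemma that if a tabloid $\{s\}$ occurring in some $u\in U$ is not of the form $\pi\{t\}$ for $\pi$ in the column stabiliser of $t$, then two entries of $s$ share both a row and a column of $t$, which permits a Garnir-type relation to annihilate that term. Everything else — the bilinear form manipulations, the dominance argument for non-isomorphism, and the final pigeonhole — is comparatively routine, and the characteristic-$0$ hypothesis is exactly what guarantees that the form remains non-degenerate on $S^\lambda$ so that irreducibility follows.
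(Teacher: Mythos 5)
Your proof is correct. The paper does not prove this lemma at all --- it recalls it as a standard fact and points the reader to \citet{sagan2001} --- and your argument (counting conjugacy classes via cycle type, constructing the Specht modules $S^\lambda$, proving irreducibility via the submodule theorem and the non-degeneracy of the invariant form in characteristic $0$, and distinguishing them by dominance) is precisely the standard proof given in that reference.
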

We write $\rho_\lambda:\SG_n\rightarrow GL(V^\lambda)\cong GL(m,\mathbb{C})$ for the irreducible representation corresponding to the partition $\lambda$ and $V^\lambda\cong \mathbb{C}^m$ is the module carrying the representation.
In what follows, we will abuse notation and use the exponent notation to refer to both the partition itself \emph{and} to the $\SG_n$-module $V^\lambda$ that carries the representation $\rho_\lambda$.

\private{Jesus: In the following paragraph, I would mention ``Maschke's theorem''. Instead of (Sagan, 2001) I would cite a classical book, like Serre's. /  Another thing: I changed a little bit the order of explanation. Now, the first paragraph recalls the basics on rep. theory while the 2nd paragraph focuses on the $n=4$ case.}
Suppose we have a representation $\rho:\SG_n\rightarrow GL(V)$, for some complex vector space $V$.
As $\SG_n$ is finite, this representation is completely reducible into irreducible parts.
Hence we may write (Maschke's theorem):
\beqn
V\cong \oplus_{\lambda}c_\lambda V^\lambda,\nonumber
\eqn
where the sum is over all partitions $\lambda$ of $n$, and the $c_\lambda$ are non-negative integers specifying the number of copies of the irreducible module $V^\lambda$ that appear in the decomposition.
For example, recall that the \emph{defining} representation is given by the action of $\SG_n$ on the $n$-dimensional vector space $\mathbb{C}^n\cong \lrc{\{e_i\}_{1\leq i\leq n}}$ defined by $\sigma: e_i\mapsto e_{\sigma(i)}$.
It is a well known result that the defining representation decomposes as $\{n\}\oplus \{n-1,1\}$, where $\{n\}$ is the one-dimensional trivial representation and the irreducible representation $\{n-1,1\}$ therefore has dimension $(n-1)$.
Consider the \emph{projection operators}:
\[\Theta_{\lambda}:=\fra{1}{|\SG_n|}\sum_{\sigma \in\SG_n}\chi^{\lambda}(\sigma)\sigma,\]
where $\chi^\lambda$ is the character of the irreducible representation $\lambda$.
We recall that these operators project a given module onto its irreducible parts, i.e. $\Theta_{\lambda}V=c_\lambda V^\lambda$.
In this way we can use the $\Theta_{\lambda}$ to compute the $c_\lambda$.

\subsection{The general procedure}
Suppose we have a Markov Lie algebra $\mathfrak{L}$ and a permutation group $G\leq \SG_n$. 
We proceed by exploiting the action of $G$ on $\mathfrak{L}$ considered first as a \emph{discrete} structure, via a choice of basis $B_{\LL}=\{L_1,L_2,\ldots ,L_d\}$, and secondly as a \emph{linear} structure; that is, the vector space $\mathfrak{L}=\lrc{L_1,L_2,\ldots,L_d}$.

We demand that $\mathfrak{L}$ satisfies the conditions of Definition~\ref{def:symm} for the permutation group $G$.
Thus an action of $G$ is defined on the basis $B_{\LL}=\{L_1,L_2,\ldots ,L_d\}$ by $\sigma\in G:L_i \mapsto L_{\rho(\sigma)(i)}$, where $\rho$ is the homomorphism $\rho:G\rightarrow \SG_d$.
An \emph{orbit} of this action is a subset $\mathcal{B}=\{L_{a_1},L_{a_2},\ldots , L_{a_{|\mathcal{B}|}}\}\subset B_{\LL}$ such that $\mathcal{B}$ is invariant: \[\sigma\mathcal{B}:=\{L_{\rho(\sigma)(a_1)},L_{\rho(\sigma)(a_2)},\ldots , L_{\rho(\sigma)(a_{|\mathcal{B}|})}\}=\mathcal{B},\] 
for all $\sigma\in G$ and $\mathcal{B}$ contains no smaller subsets with this property (i.e. $\mathcal{B}$ is minimal).
Notice that this defines an equivalence relation that decomposes $B_\LL$ into disjoint orbits of $G$:
\begin{eqnarray*}
B_\mathfrak{L}=\mathcal{B}_1\cup \mathcal{B}_2 \cup \ldots \cup \mathcal{B}_r,
\end{eqnarray*}
where $\sigma\mathcal{B}_i=\mathcal{B}_i$ for all $i$ and $\sigma\in G$.

\tb{Now, it is a remarkable result of group actions that, up to bijective correspondence, a complete list of the orbits of a given group $G$ can be expressed using the \emph{orbit stabilizer theorem} (see \citet{bogopolski2008} for example), as follows.}
Consider a subgroup $H\leq G$ and partition $G$ into disjoint \tb{(right)} cosets $G/H=\{eH=H,\sigma_2 H,\ldots, \sigma_q H\}$ where each $\sigma_i\in G$ is chosen such that $\sigma_iH\!=\!\sigma_j H\Leftrightarrow i\!=\!j$ \tb{and $q\!=\!|G/H|\!=\!|G|/|H|$}.
Thus $G/H$ is a finite set with an action of $G$ defined by $\sigma:\sigma_i H\mapsto (\sigma\sigma_i) H$.
The orbit stabilizer theorem then says that there is a bijection of any orbit $\mathcal{B}$ with $G/G_x$, where $G_x=\{g\in G:gx=x\}$ is the \emph{stabilizer} of some element $x\in \mathcal{B}$. 
As $G_x\leq G$, and there are only finitely many subgroups of $G$, it is thus possible to give a complete list of the orbits of $G$ (up to isomorphism) by simply listing all $G/H$ with $H\leq G$.

Secondly, we can consider $\mathfrak{L}_{GMM}=\{Q=\sum_{i\neq j}\alpha_{ij}L_{ij}:\alpha_{ij}\in \mathbb{C}\}$ as a $G$-module by linearly extending the action (\ref{eq:GMMaction}):
\tb{
\beqn
Q=\sum_{i\neq j}\alpha_{ij}L_{ij}\mapsto \sigma\cdot Q=\sum_{i\neq j}\alpha_{ij}L_{\sigma(i)\sigma(j)},\nonumber
\eqn
for all $\sigma\in G$.}
By virtue of Maschke's theorem, we can also decompose $\mathfrak{L}_{GMM}$ into \emph{irreducible} $G$-modules $V^\lambda\subset \mathfrak{L}$ of the linear action of $G$.

\tb{On the other hand, for each $H\leq G$ we can extend the set $G/H$ to a complex vector space} 
\tb{\[\lrc{G/H}=\lrc{H,\sigma_2 H,\ldots, \sigma_q H}=\{v=c_1[e]+c_2[\sigma_2]+\ldots +c_q[\sigma_q]:c_i\in\mathbb{C}\},\] 
where $[\sigma]\equiv \sigma H$}, \tb{and consider this vector space as $G$-module via the mapping} 
\[\sigma:v=c_1[e]+c_2[\sigma_2]+\ldots +c_q[\sigma_q]\mapsto v'=c_1[\sigma]+c_2[\sigma\sigma_2]+\ldots +c_q[\sigma\sigma_q].\] 
\tb{We can then compare the irreducible $G$-modules that occur in the decomposition of $\mathfrak{L}_{GMM}$ to those that occur in the decomposition of $\lrc{G/H}$ for each $H\leq G$.
Finally, we can attempt to construct sub-algebras $\mathfrak{L}\subset \mathfrak{L}_{GMM}$ with a basis $B_\mathfrak{L}$ such that $B_\mathfrak{L}=\mathcal{B}_1\cup \mathcal{B}_2 \cup \ldots \cup \mathcal{B}_r$ is a plausible union of orbits $\mathcal{B}_i$ that are consistent with the linear decomposition of $\mathfrak{L}_{GMM}$} induced by the action of $G$.

\vspace*{3mm}
\noindent 
\textbf{Example.} Consider the action of $\SG_2\cong \{e,(12)\}$ on the two-state general Markov model considered as the finite set $B_{{GMM}}=\{L_{12},L_{21}\}$.
As $(12) L_{12}=L_{21}$ we see immediately that $B_{{GMM}}$ contains only one $\SG_2$-orbit (namely itself).
On the other hand, the subgroups of $\SG_2$ are the trivial group $H_1:=\{e\}$ and $H_2:=\SG_2$ itself.
We have the bijections: 
\begin{eqnarray}\label{bij1}
 \SG_2/H_1:=\{eH_1,\sigma H_1\}=\{\{e\},\{\sigma\}\}\mapsto \SG_2,
\end{eqnarray}
and 
\begin{eqnarray} \label{bij2}
\SG_2/H_2:=\{eH_1,\sigma H_1\}=\{\{e,\sigma\},\{\sigma,e\}\}=\{\{e,\sigma\}\}\mapsto\{e\}.
\end{eqnarray}

Thus, \tb{by comparing the cardinality of $B_{{GMM}}$ to these two orbits,} we conclude that $B_{{GMM}}\cong \SG_2/H_1 \cong \SG_2$, which is to say that the action of $\SG_2$ on $B_{{GMM}}$ is isomorphic to the action of $\SG_2$ on \emph{itself} (cf. Cayley's theorem).

Now, consider the action of $\SG_2$ on the two-state general Markov Lie algebra considered as a complex vector space: $\mathfrak{L}_{GMM}\!=\!\lrc{L_{12},L_{21}}$.
It is well known that there are exactly two irreducible $\SG_2$-modules \tb{$V^{id}$ and $V^{sgn}$} (both one-dimensional, $V^{id}\cong V^{sgn}\cong \mathbb{C}$), where the $\SG_2$ action is given by
\beqn
v\in V^{id}&\mapsto \sigma v=v,\nonumber\\ 
v\in V^{sgn}& \mapsto \sigma v=sgn(\sigma)v,
\eqn
for all $\sigma \in \SG_2$.
The orbit $\SG_2/H_1$ described in (\ref{bij1}) has \tb{size} two and linear decomposition $\lrc{\SG_2/H_1}\cong V^{id}\oplus V^{sgn}$, whereas the orbit in (\ref{bij2}) has \tb{size} one and has linear decomposition $\lrc{\SG_2/H_1}\cong V^{id}$.
If we define $L_{id}\!:=\!L_{12}+L_{21}$ and $L_{sgn}\!:=\!L_{12}-L_{21}$ we see immediately that $\lrc{L_{id}}\cong V^{id}$ and $\lrc{L_{sgn}}\cong V^{sgn}$.
Thus we conclude that $\mathfrak{L}_{GMM}\tb{=\langle L_{id}\rangle_{\CC}\oplus \langle L_{sgn}\rangle_{\CC}} \cong V^{id}\oplus V^{sgn}$, and we see that the only possible two-state Lie Markov models with $\SG_2$ symmetry are exactly the two cases given in \S\ref{sec3}. 
\vspace*{3mm}

\private{Jeremy: I replaced ``rank'' of orbit with ``size'' of orbit. I've also replaced $M_{id}$ with $V^{id}$. This is consistent with the notation in the next section -- $M$ is overused already!}

\tb{
These ideas can be generalized to models with a greater number of states.
The general procedure for generating a $n$-state Lie Markov model, $\mathfrak{L}$ , with $G\leq \SG_n$ symmetry is as follows. 
\begin{enumerate}
\item Decompose the Lie algebra of the general Markov model into irreducible modules of $G$ (Maschke's theorem): $\mathfrak{L}_{GMM}=\oplus_{\lambda}f_{\lambda} \mathfrak{L}^\lambda\cong \oplus_{\lambda}f_{\lambda}V^{\lambda}$, where $\lambda$ labels the irreducible $G$-module $V^\lambda\cong\mathfrak{L}^\lambda$ and the $f_\lambda$ are integers specifying how many times each irreducible module occurs in the decomposition.
\item Apply the orbit stabilizer theorem and construct the list of $G$-orbits, $G/H_i$, by working through the subgroups $H_i\leq G$. 
\item Extend each of the orbits linearly over $\mathbb{C}$ to the $G$-module $\lrc{G/H_i}$ and decompose each into irreducible $G$-modules: $\lrc{G/H_i}\cong \oplus_{\lambda}h^{(i)}_{\lambda}V^{\lambda}$, where again the $h^{(i)}_\lambda$ are integers.
\item Working up in dimension $d$, consider all unions of $G$-orbits $S:=(G/H_1)\cup (G/H_2)\cup \ldots \cup (G/H_q)$ such that $\left|S\right|=\sum_{1\leq i\leq q}\left|G/H_i\right|\!=\!d$ (where $|\cdot|$ stands for cardinality).
\item For each $S$, consider its linear decomposition into irreducible $G$-modules: $\lrc{S}\cong\oplus_\lambda a_\lambda V^\lambda$ where $a_\lambda:=h^{(1)}_\lambda+h^{(2)}_\lambda+\ldots +h^{(q)}_\lambda$, and, in order to exclude unions of $G$-orbits that do not occur in the linear decomposition of $\mathfrak{L}_{GMM}$ as a $G$-module, check that $a_\lambda\leq f_\lambda$, for each $\lambda$. 
\item For each case thus identified, consider the vector space $\mathfrak{L}\!:=\!\oplus_{\lambda}a_{\lambda}\mathfrak{L}^{\lambda}$ and use explicit computation to check whether $\mathfrak{L}$ forms a Lie algebra.
\item If $\mathfrak{L}$ forms a Lie algebra, attempt to show that it has a stochastic basis. 
\end{enumerate}
}

This procedure is guaranteed to produce all Lie Markov models with symmetry $G$ \tb{and is best understood by studying the examples given in the next section}.
\tb{We have successfully implemented it to determine the list of Lie Markov models} in the two-state case \tb{with $G=\SG_2$}, the three-state case with $G=\SG_3$, $\mathbb{Z}_3$ and the four state case with $G=\SG_4$, $\mathbb{Z}_2\wr \mathbb{Z}_2$, $\mathbb{Z}_2\times \mathbb{Z}_2$ and $\mathbb{Z}_4$.
In the \S\ref{sec6} we will give a complete presentation of the four state $G=\SG_4$ case, and defer presentation of the other cases to a future publication.
However, for $n$ ``large'' and $G$ ``small'', \tb{it is worth noting that} the final two steps in the procedure become quite difficult and computationally expensive.
Clearly, further theoretical ideas, such as those alluded to at the start of this section, are needed to describe Lie Markov models in their entirety.

\section{Lie Markov models with $\SG_4$ symmetry}\label{sec6}

As we are especially interested in nucleotide evolution, we fix $n\!=\!4$ and use the projection operators to decompose the Lie algebra of the general Markov model into irreducible representations of $\SG_4$.
The partitions of $n\!=\!4$ are $\{4\}$, $\{31\}$, $ \{2^2\}$, $\{21^2\}$ and $\{1^4\}$.
Each of these partitions labels an inequivalent irreducible representation of $\SG_4$ and the corresponding character table is given in Table~\ref{tab:chartabS4}.
Notice that the first row in the character table gives the dimension of each representation. Notice also that there are two one-dimensional representations, namely $\{4\}$, the \emph{trivial} representation where each permutation is mapped to the identity $1$, and $\{1^4\}$, the \emph{sign} representation where each permutation is mapped to $\pm 1$ based on the sign of the permutation.

\begin{table}[t]
\centering
\begin{tabular}{c|ccccc}
 & $\{4\}$  & $\{31\}$ & $\{2^2\}$ & $\{21^2\}$ & $\{1^4\}$ \\
\hline
$e$ & 1 & 3 & 2 & 3 & 1 \\
$[(12)]$ & 1  & 1 & 0 & -1\hspace{.35em} & -1\hspace{.35em}\\
$[(123)]$ & 1 & 0 & -1\hspace{.35em} & 0 & 1  \\
$[(12)(34)]$ & 1 & -1\hspace{.35em} & 2 & -1\hspace{.35em} & 1 \\
$[(1234)]$ & 1  & -1\hspace{.35em} & 0 & 1 & -1\hspace{.35em}\\
\end{tabular}
\caption{The character table of $\SG_4$. The rows are the conjugacy classes and the columns are the irreducible characters.}
\label{tab:chartabS4}
\end{table}

We explicitly confirm the decomposition of the defining representation for $n\!=\!4$ by constructing the projection operators
\beqn
\Theta_{\{4\}} &=\fra{1}{24}\sum_{\sigma \in\SG_4}\chi^{\{4\}}(\sigma)\sigma=\fra{1}{24}\left(e+(12)+(13)+(14)+(23)+\ldots +(1423)+(1432)\right),\nonumber\\
\Theta_{\{31\}} &=\fra{1}{24}\sum_{\sigma \in\SG_4}\chi^{\{31\}}(\sigma)\sigma\\
&=\fra{1}{24}\left(3e-(12)-(13)-(14)-(23)-(24) -(34) -(12)(34)-(13)(24)-(14)(23)\right.\\&\hspace{4em}\left.+(1234)+(1243)+(1324)+(1342) +(1423)+(1432)\right).
\eqn 
Notice that $\Theta_{\{4\}}\cdot e_i=\frac{1}{4}\left(e_1+e_2+e_3+e_4\right)$ for all $i$ and $\Theta_{\{31\}}\cdot e_1=\frac{1}{24}\left(6 e_1-2 e_2-2 e_3-2 e_4\right)$.
From this we conclude that the defining representation contains both the modules $\{4\}$ and $\{31\}$. 
The dimension count $4\!=\!1+3$ then shows that these are the only irreducible modules occurring in the defining representation (or, alternatively, one may check that $\Theta_{\{2^2\}}\cdot e_i=\Theta_{\{21^2\}}\cdot e_i=\Theta_{\{1^4\}}\cdot e_i=0$ for all $i$).
Hence:
\beqn
\lrc{\{e_i\}_{1\leq i\leq 4}}\cong \{4\}\oplus\{31\}.\nonumber
\eqn
We will use this decomposition of the defining representation repeatedly in what follows.

We would like to decompose $\mathfrak{L}_{GMM}=\langle \{L_{ij}\}_{1\leq i\neq j \leq 4} \rangle_{\mathbb{C}}$ into irreducible modules of $\SG_4$.
Recall that the action of $\SG_4$ is defined by $\sigma: L_{ij}\mapsto L_{\sigma(i)\sigma(j)}$. 
Compare this to the action of $\SG_4$ on the tensor product space $\mathbb{C}^4\otimes \mathbb{C}^4\cong \lrc{\{e_i\otimes e_j\}_{1\leq i, j \leq 4}}$ defined by $\sigma:e_i\otimes e_j\mapsto e_{\sigma(i)}\otimes e_{\sigma(j)}$. 
As any \tb{tangent vector} $L\in \mathfrak{L}_{GMM}$ can be expressed as $L=\sum_{1\leq i\neq j\leq 4}\alpha_{ij}L_{ij}$, we see immediately that the action of $\SG_4$ on $\mathfrak{L}_{GMM}$ is isomorphic to the action on the subspace of tensors $\left\{\psi\in \mathbb{C}^4 \otimes \mathbb{C}^4:\psi_{ii}=0,1\leq i\leq 4\right\}$.
If we disregard the constraints $\psi_{ii}=0$ for a moment, what we have is the Kronecker product of two copies of the defining representation: ie. $\mathbb{C}^4\cong\{4\}\oplus\{31\}$ and $\mathbb{C}^4\otimes \mathbb{C}^4\cong \left(\{4\}\oplus\{31\}\right)\otimes \left(\{4\}\oplus\{31\}\right)$.
Referring to Table~\ref{tab:chartabS4} and appealing to orthogonality of irreducible characters, we find that
\beqn\label{eq:decompGMM}
\mathbb{C}^4\otimes \mathbb{C}^4\cong\left(\{4\}\oplus\{31\}\right)\otimes \left(\{4\}\oplus\{31\}\right)=2\{4\}\oplus3\{31\}\oplus\{2^2\}\oplus\{21^2\}.\nonumber
\eqn 
Now the subspace spanned by the $e_i\otimes e_i$ is itself isomorphic to the defining representation: $\lrc{\{e_i\otimes e_i\}_{1\leq i\leq 4}}\cong \{4\}\oplus\{31\}$, and this subspace must appear in the decomposition of $\mathbb{C}^{4} \otimes \mathbb{C}^{4}$. 
Setting $\psi_{ii}=0$ is the same as removing this subspace from the decomposition.
Thus we have:
\begin{res}
The decomposition of the four state general rate-matrix model $\mathfrak{L}_{GMM}$ into irreducible representations of $\SG_4$ is given by 
\beqn\label{eq:schurdecompS4}
\mathfrak{L}_{GMM}\cong \{4\}\oplus2\{31\}\oplus\{2^2\}\oplus\{21^2\},
\eqn
where the decomposition of the dimension is given by $12\!=\!1+2\times 3+2+3$.
\end{res}

\subsection{A convenient basis}
We would now like to present an explicit basis for each module present in the decomposition (\ref{eq:schurdecompS4}).
Consider the vector
\beqn
L_{id}:=\sum_{{1\leq i\neq j\leq 4}}L_{ij}=L_{12}+L_{13}+\ldots +L_{43}=
\left(
\begin{array}{cccc}
	-3 & 1 & 1 &1  \\
	1  & -3 & 1 & 1 \\
	1 & 1 & -3 & 1 \\
	1 & 1 & 1 & -3
\end{array}\right).\nonumber
\eqn
Clearly this vector is invariant
\beqn
\sigma\cdot L_{id}=L_{id},\qquad \forall \sigma \in\SG_4,\nonumber
\eqn
and it hence spans the trivial representation: $\lrc{L_{id}}\cong \{4\}$.
Thus $\lrc{L_{id}}$ accounts for the first module appearing in the decomposition $(\ref{eq:schurdecompS4})$.

Define the \emph{row sum} vectors
\beqn
R_{i}:=\sum_{{j: 1\leq i\neq j\leq 4}}L_{ij},\nonumber
\eqn
and the corresponding \emph{column sum} vectors 
\beqn
C_{i}:=\sum_{{j: 1\leq i\neq j\leq 4}}L_{ji}.\nonumber
\eqn
For example, we have
\beqn
R_2=\left(
\begin{array}{cccc}
	-1 & 0 & 0 &0  \\
	1  & 0 & 1 & 1 \\
	0 & 0 & -1 & 0 \\
	0 & 0 & 0 & -1
\end{array}\right),
\qquad
C_4=\left(
\begin{array}{cccc}
	0 & 0 & 0 &1  \\
	0  & 0 & 0 & 1 \\
	0 & 0 & 0 & 1 \\
	0 & 0 & 0 & -3
\end{array}\right).\nonumber
\eqn
Consider the action of $\SG_4$ on each of these vectors:
\beqn
\sigma:R_{i}&\mapsto\sigma R_{i}=\sum_{{j:  1\leq i\neq j\leq 4}}L_{\sigma(i)\sigma(j)}=R_{\sigma(i)},\nonumber\\
\sigma:C_{i}&\mapsto\sigma C_i= \sum_{{j :  1\leq i\neq j\leq 4}}L_{\sigma(j)\sigma(i)}=C_{\sigma(i)}.
\eqn
Clearly these actions are isomorphic to the defining representation: $\sigma:e_i\mapsto e_{\sigma(i)}$.
Therefore, the (invariant) subspace generated by the row sum vectors, as well as that generated by the column sum vectors, is  isomorphic as a $\mathfrak{S}_4$-module to the defining representation:
\beqn\label{eq:Rdecomp}
\langle R_1,R_2,R_3,R_4\rangle_{\mathbb{C}}\cong\langle C_1,C_2,C_3,C_4 \rangle_{\mathbb{C}}\cong \{4\}\oplus \{31\}.
\eqn
Notice that these vectors are linearly independent except for the single linear relation
\beqn
R_1+R_2+R_3+R_4=C_1+C_2+C_3+C_4=L_{id}.\nonumber
\eqn
Keeping this linear dependence in mind, we may write
\beqn
\langle L_{id}\rangle_{\mathbb{C}}\oplus\langle R_1,R_2,R_3,R_4\rangle_{\mathbb{C}}\oplus\langle C_1,C_2,C_3,C_4 \rangle_{\mathbb{C}}\cong \{4\}\oplus 2\{31\},\nonumber
\eqn
and we see that we have now accounted for the first three modules occurring in the decomposition $(\ref{eq:schurdecompS4})$.

Referring to the graphical representation of the Kimura 3ST model given in Figure~\ref{fig:K3STgraph}, we see that the Kimura 3ST model has a basis given by:
\beqn
B_{K3ST}=\{ L_\alpha,L_\beta,L_\gamma\},\nonumber
\eqn
with 
\beqn
L_\alpha &=L_{12}+L_{21}+L_{34}+L_{43},\nonumber\\
L_\beta &=L_{13}+L_{31}+L_{24}+L_{42},\\
L_\gamma &=L_{14}+L_{41}+L_{23}+L_{32}.
\eqn
We claimed in \S\ref{sec4} that the symmetry of the Kimura $3ST$ model is all of $\SG_4$.

\private{Jeremy: I tried $B_{\mathfrak{L}_{K3ST}}$ and think it looks over the top. Do you think it is ok to use $B_{\mathfrak{L}}$ for an arbitrary model $\mathfrak{L}$ and $B_{GMM}$ and $B_{K3ST}$ for the particular models?}
\private{Jesus: here and somewhere in the previous sections, I suggest to replace "set isomorphism" for "bijection". It sounds much more natural to me and I think they are exactly the same thing in our framework.}
\private{Jeremy: Yes, good idea. It's actually pretty tricky! What we really need is different symbols $=$, $\cong$ etc depending on what structures we are claiming are equal/isomorphic.}

We now prove this:
\begin{res}\label{eq:K3STsymm}
The symmetry of the Kimura 3ST model is $\SG_4$.
\end{res}
\begin{proof}
We consider the following set of bipartitions of the set $\{1,2,3,4\}$: 
\beqn
S=\{12|34,13|24,14|23\},\nonumber
\eqn
where $ij|kl:=\{\{i,j\},\{k,l\}\}$.
Now take the bijection between these bipartitions and the three tangent vectors of the K3ST model given by:
\beqn\label{eq:isobik3st}
ij|kl\mapsto L_{ij}+L_{ji}+L_{kl}+L_{lk}.
\eqn
\private{Jeremy: I notice that you replaced $ij|kl$ with $ij\mid kl$ using the ``mid'' command. I think the later leaves too large a gap and prefer the look of the former. I have changed them back!}
This map is well-defined and, in particular,
\beqn
12|34\mapsto L_\alpha,\qquad 13|24\mapsto L_\beta,\qquad 14|23\mapsto L_\gamma,\nonumber
\eqn
and we note relations such as $12|34=43|12\mapsto L_{43}+L_{34} +L_{12}+L_{21}=L_\alpha$.
Notice that $\SG_4$ acts on the set $S$ by taking
\begin{eqnarray*}
\sigma:ij|kl\mapsto \sigma(i)\sigma(j)|\sigma(k)\sigma(l), 
\end{eqnarray*}
where $i,j,k,l$ are all different.
As $S$ is obviously invariant under this action of $\SG_4$, we conclude that the same is true for the set $\{L_\alpha,L_\beta,L_\gamma\}$ under the bijection (\ref{eq:isobik3st}). 
\end{proof}
This result immediately extends linearly to show that $\mathfrak{L}_{K3ST}=\lrc{L_\alpha,L_\beta,L_\gamma}$ forms a module of $\SG_4$ with dimension three, but how does this module fit into the decomposition (\ref{eq:schurdecompS4})?
Note that
\beqn
L_\alpha+L_\beta+L_\gamma=L_{id},\nonumber
\eqn
so $\mathfrak{L}_{K3ST}$ contains the trivial module $\lrc{L_{id}}\cong\{4\}$.
Referring to our decomposition (\ref{eq:schurdecompS4}), it is enough to do a dimension count to see that the only possible decomposition is
\beqn\label{eq:K3STdecomp}
\mathfrak{L}_{K3ST}\cong \{4\}\oplus \{2^2\}.
\eqn
Thus the Kimura 3ST model accounts for the third term in (\ref{eq:decompGMM}) and we are left with accounting for the module $\{21^2\}$.

To this end we define the six antisymmetric combinations
\beqn
A_{ij}:=L_{ij}-L_{ji}, \quad 1\leq i<j\leq 4.\nonumber
\eqn
Referring to Table~\ref{tab:chartabS4}, the projector onto the $\{21^2\}$ subspace is given by
\beqn
\Theta_{\s{21^2}}&=\fra{1}{\left|\SG_4\right|}\sum_{\sigma\in\SG_4}\chi^{\s{21^2}}(\sigma)\sigma\nonumber\\
&=\fra{1}{24}\left(3e-(12)-(13)-(14)-(23)-(24)-(34)-(12)(34)-(13)(24)\right.\\
&\hspace{6em}\left.-(14)(23)+(1234)+(1243)+(1324)+(1342)+(1423)+(1432)\right).
\eqn
Applying this projector to our antisymmetric combinations, we get, for example,
\beqn\label{eq:211proj}
\Theta_{\s{21^2}}\cdot A_{12}=\fra{1}{12}\left(2A_{12}-A_{13}-A_{14}+A_{23}+A_{24}\right),
\eqn
from which it is easy to see that the general rule is
\beqn
P_{ij}:=\fra{1}{12}\Theta_{\{21^2\}}\cdot A_{ij}=2A_{ij}-A_{ik}-A_{il}+A_{jk}+A_{jl},\nonumber
\eqn
where $i,j,k,l$ are all different.
We can also see from this that we have (at least) three linearly independent relations:
\beqn
P_{12}+P_{13}+P_{14}=P_{12}+P_{23}+P_{24}=P_{13}+P_{23}+P_{34}=0,\nonumber
\eqn
so the dimension of $\lrc{\{P_{ij}\}_{1\leq i<j\leq 4}}$ is at most three.
However, as the projection (\ref{eq:211proj}) is non-zero, and $\{21^2\}$ is three dimensional, it must be the case that
\beqn
\langle \{P_{ij}\}_{1\leq i<j\leq 4}\rangle_{\mathbb{C}}\cong\{21^2\}.\nonumber
\eqn

Putting all of these results together:
\begin{res}\label{eq:schurbasis}
The Lie algebra $\mathfrak{L}_{GMM}$ of the four state general Markov model can be expressed as
\beqn
\mathfrak{L}_{GMM}&=\lrc{\{L_{ij}\}_{1\leq i\neq j\leq 4}}\\
&\cong \lrc{\{L_{id}\}\cup \{L_\alpha,L_\beta,L_\gamma\}\cup\{R_{i}\}_{1\leq i\leq 4}\cup\{C_{i}\}_{1\leq i\leq 4}\cup\{P_{ij}\}_{1\leq i<j \leq 4}},\nonumber
\eqn
with linear dependencies
\beqn
L_{id}&=L_\alpha+L_\beta+L_\gamma=R_1+R_2+R_3+R_4=C_1+C_2+C_3+C_4,\\
P_{12}+P_{13}+P_{14}&=P_{12}+P_{23}+P_{24}=P_{13}+P_{23}+P_{34}=0,\nonumber
\eqn
and decomposition into irreducible representations of $\SG_4$:
\beqn
\lrc{L_{id}}&\cong \{4\},\nonumber\\
\lrc{L_\alpha,L_\beta,L_\gamma}&\cong \{4\}\oplus\{2^2\},\\
\lrc{\{R_{i}\}_{1\leq i\leq 4}}&\cong \lrc{\{C_{i}\}_{1\leq i\leq 4}}\cong  \{4\}\oplus \{31\},\\
\lrc{\{P_{ij}\}_{1\leq i<j\leq 4}}&\cong \{21^2\}.\nonumber
\eqn
\end{res}

We would of course like to understand the Lie algebra of $\mathfrak{L}_{GMM}$ in this basis.
As discussed at the beginning of this section, we are unfortunately bereft of theoretical results that would take us directly from (\ref{eq:GMMalg}) and Result~\ref{eq:schurbasis} to give the Lie algebra in this basis.
Instead, resorting to tedious matrix computations we have found:
\begin{res}
Using the basis defined by the decomposition into irreducible representations of $\SG_4$ given in Result~\ref{eq:schurbasis}, the Lie algebra of the general Markov model (\ref{eq:GMMalg}) can be expressed as: 
\beqn\label{eq:algebrasS4}
\left[R_i,R_j\right]&=R_i-R_j,\\
\left[L_\alpha,L_\beta\right]&=\lie{L_\alpha}{L_\gamma}=\lie{L_\beta}{L_\gamma}=0,\\
\left[L_{ij|kl},R_{i}\right]&=R_j-R_i,\\
\left[C_i,C_j\right]&=R_j-R_i-P_{ij},\\
\left[C_i,R_j\right]&=\delta_{ij}\left(L_{id}-R_{j}\right),\\
\left[C_{i},L_{ij|kl}\right]&=R_j-R_i-P_{ij},
\eqn
with the understanding that $P_{ij}=-P_{ji}$, $L_{12|34}=L_{\alpha}$, $L_{13|24}=L_{\beta}$ and $L_{14|23}=L_\gamma$.
\end{res}
This nicely gives us an alternative presentation of the Lie algebra of the general Markov model (\ref{eq:GMMalg}) in the basis that explicitly presents the decomposition into irreducible modules of $\SG_4$.

\private{Jeremy: This is the end of the ``A convenient basis'' subsection. Again, the end of the subsection needs to be announced, but I'm not sure how to do it.}
\subsection{Application of the general method.}

Following the general scheme of Section 5, our task now is to identify Lie Markov models occurring as sub-algebras in (\ref{eq:algebrasS4}).
In Table~\ref{tab:S4models} we present the decomposition of the orbits of $\SG_4$.
These are computed by using the orbit stabilizer theorem and projecting $\lrc{\SG_4/H}$ onto the irreducible module $V^\lambda$ of $\SG_4$ using the projection operator $\Theta_\lambda$.
These computations are quite tedious, so here we will simply develop the case $H=\mathbb{Z}_2\wr \mathbb{Z}_2$ as an illustrative example.

First of all we note that there are three copies of $H=\mathbb{Z}_2\wr \mathbb{Z}_2$ in $\SG_4$:
\beqn\label{z2wrz2}
\mathbb{Z}_2\wr \mathbb{Z}_2&\cong\{e,(12),(34),(12)(34),(13)(24),(14)(23),(1324),(1423)\}\\
&\cong \{e,(13),(24),(13)(24),(12)(34),(14)(23),(1234),(1432)\}\\
&\cong \{e,(14),(23),(14)(23),(13)(24),(12)(34),(1243),(1342)\}.
\eqn
Obviously each of these copies of $\mathbb{Z}_2\wr \mathbb{Z}_2$ is structurally the same\footnote{This is because each copy can be mapped to the others by conjugation with a permutation $\sigma \in \SG_4$.} and hence will result in an isomorphic action of $\SG_4$ on $\SG_4/H$.
Choosing the first copy of $\mathbb{Z}_2\wr \mathbb{Z}_2$ in (\ref{z2wrz2}), we have 
\begin{eqnarray*}
\SG_4 \, / \, \mathbb{Z}_2\wr \mathbb{Z}_2=\left\{ [e],[(13)],[(14)] \right \}
\end{eqnarray*}
where $[\sigma]$ represents the coset in $\mathfrak{S_4}/\mathbb{Z}_2\wr \mathbb{Z}_2$ containing the element $\sigma$, so for example, 
\beqn
{[e]} & =  \{e,(12),(34),(12)(34),(13)(24),(14)(23),(1324),(1423)\}, \\
{[(13)]} & =  \{(13),(123),(134),(1234),(24),(1432),(243),(142)\}, \\
{[(14)]} & =  \{(14),(124),(143),(1243),(1342),(23),(132),(234)\}. \nonumber
\eqn
These cosets inherit an action of $\SG_4$ by taking $\sigma: [\sigma']\mapsto [\sigma\sigma']$, which can be extended linearly to a representation of $\SG_4$ by taking the module 
\beqn
\lrc{\mathfrak{S_4}\, / \,\mathbb{Z}_2\wr \mathbb{Z}_2}=\lrc{[e],[(13)],[(14)]}\cong \CC^3,\nonumber
\eqn
with action defined as follows: given $\sigma\in \SG_4$,  
a vector $v\!=\!c_1 [e]+c_2[(13)]+c_3[(14)]$ is mapped to $v':=\sigma\cdot v=c_1[\sigma]+c_2[\sigma(13)]+c_3[\sigma(14)]$. 

We would like to decompose $\lrc{\mathfrak{S_4}\, / \,\mathbb{Z}_2\wr \mathbb{Z}_2}$ into irreducible modules of $\SG_4$.
This can be achieved by applying the projection operators.
For example:
\beqn
\Theta_{\{4\}}[e]&=\fra{1}{24}\sum_{\sigma\in \SG_4}\sigma\cdot [e]\\
&=\fra{1}{24}\sum_{\sigma\in \SG_4}[\sigma]\\
&=\fra{1}{24}\left(8[e]+8[(13)]+8[(14)]\right),\nonumber
\eqn
where, in the last equality, we have identified common cosets (eg. $[(12)]\!=\![e]$ etc.)
As this projection is non-zero, we conclude that $\lrc{\mathfrak{S_4}/\mathbb{Z}_2\wr \mathbb{Z}_2}$ contains the trivial representation $\{4\}$.
It is easy to check that $\Theta_{\{4\}}[(13)]=\Theta_{\{4\}}[(14)]=\fra{1}{3}\left([e]+[(13)]+[(14)]\right)$ so in fact $\lrc{\mathfrak{S_4}/H}$ contains $\{4\}$ only \emph{once}. 
Now, referring to the Table~\ref{tab:chartabS4}, we have
\beqn
\Theta_{\{31\}}[e]&=\fra{1}{24}\sum_{\sigma\in \SG_4}\chi^{\{31\}}(\sigma) \sigma\cdot [e]\\
&=\fra{1}{24}\left(3e\!+\!(12)\!+\!(13)\!+\!(14)\!+\!(23)\!+\!(24)\!+\!(34)\!-\!(12)(34)\!-\!(13)(24)\right.\\
&\hspace{4em}\left.-(14)(23)\!-\!(1234)\!-\!(1243)\!-\!(1324)\!-\!(1342)\!-\!(1423)\!-\!(1432)\right)\cdot [e]\\
&=0,\nonumber
\eqn
where again the final equality follows by identifying common cosets.
Similarly, we check that $\Theta_{\{31\}}[(13)]=\Theta_{\{31\}}[(14)]=0$ and we learn that $\lrc{\mathfrak{S_4}/\mathbb{Z}_2\wr \mathbb{Z}_2}$ does \emph{not} contain the representation $\{31\}$.
In any case we could have concluded this from the outset by noting that both $\{31\}$ and $\lrc{\mathfrak{S_4}/\mathbb{Z}_2\wr \mathbb{Z}_2}$ are three-dimensional and we have already accounted for one of these dimensions of $\lrc{\mathfrak{S_4}/H}$ with the trivial representation.
Referring again to Table~\ref{tab:chartabS4}, we see that either $\lrc{\mathfrak{S_4}/H}$ contains two copies of the sign representation $\{1^4\}$ or a single copy of $\{2^2\}$.

Consider
\beqn
\Theta_{\{2^2\}}[e]&=\fra{1}{24}\sum_{\sigma\in \SG_4}\chi^{\{2^2\}}(\sigma)\cdot [e]\\\nonumber
&=\fra{1}{24}\left(2e-(123)-(124)-(134)-(143)-(234)-(243)\right.\\
&\hspace{4em}\left.+2(12)(34)+2(13)(24)+2(14)(23)\right)\cdot[e]\\
&=\fra{1}{24}\left(2[e]\!-\![(13)]\!-\![(14)]\!-\![(13)]\!-\![(14)]\!-\![(14)]\!-\![(13)]\!+\!2[e]\!+\!2[e]\!+\!2[e]\right)\\
&=\fra{1}{24}\left(8[e]-3[(13)]-3[(14)]\right)
\eqn 
and
\beqn
\Theta_{\{2^2\}}[(13)]&=\fra{1}{24}\sum_{\sigma\in \SG_4}\chi^{\{2^2\}}(\sigma)\cdot [(13)]\\\nonumber
&=\fra{1}{24}\left(2e-(123)-(124)-(134)-(143)-(234)-(243)\right.\\
&\hspace{4em}\left.+2(12)(34)+2(13)(24)+2(14)(23)\right)\cdot[(13)]\\
&=\fra{1}{24}\left(2[(13)]-[(23)]-[(1324)]-[(14)]-[(34)]-[(1423)]\right.\\
&\hspace{4em}\left.-[(1243)]+2[(1432)]+2[(24)]+2[(1234)]\right)\\
&=\fra{1}{24}\left(8[(13)]-3[(23)]-3[e]\right).
\eqn

Putting this together we have
\beqn
\lrc{\mathfrak{S_4}/H}&=\lrc{[e]+[(13)]+[(14)],8[e]-3[(13)]-3[(14)],8[(13)]-3[(23)]-3[e]}\\
&\cong \{4\}\oplus \{2^2\}.\nonumber
\eqn

Proceeding as in this example, we have produced the results summarized in Table~\ref{tab:S4models}.
In particular, Table~\ref{tab:S4models} gives the decomposition of the action of $\SG_4$ on $\lrc{\SG_4/H}$ into irreducible representations for each subgroup $H\leq \SG_4$.
The second column records how many copies of each subgroup $H$ occur in $\SG_4$, with non-isomorphic copies accounted for with distinct decomposition in the fourth column.
For example, there are two ``types'' of $\mathbb{Z}_2$ in $\SG_4$:
\beqn
\mathbb{Z}_2\cong \{e,(12)\}\cong \{e,(13)\}\cong \{e,(14)\}\cong \{e,(23)\} \cong \{e,(24)\} \cong \{e,(34)\},\nonumber
\eqn
or 
\beqn
\mathbb{Z}_2\cong \{e,(12)(34)\}\cong \{e,(13)(24)\}\cong \{e,(14)(23)\}.\nonumber
\eqn
These two types are structurally different and as a result, the corresponding spaces $\lrc{\SG_4/H}$ have differing decomposition into irreducible subspaces, as shown in Table~\ref{tab:S4models}.

Similarly, there are two ``types'' of $\mathbb{Z}_2\times \mathbb{Z}_2$:
\beqn
\mathbb{Z}_2\times \mathbb{Z}_2\cong \{e,(12),(34),(12)(34)\}\cong \{e,(13),(24),(13)(24)\}\cong \{e,(14),(23),(14)(23)\},\nonumber
\eqn
and
\beqn
\mathbb{Z}_2\times \mathbb{Z}_2\cong \{e,(12)(34),(13)(24),(14)(23)\}.\nonumber
\eqn
Again, these two types have differing decomposition into irreducible subspaces, as shown in Table~\ref{tab:S4models}.

\begin{table}[t]
\centering
{\renewcommand{\arraystretch}{1.2}
\begin{tabular}{c|c|c|c|c}
$H\leq \SG_4$ & Copies & Cardinality$=\frac{|\SG_4|}{|H|}$ & Decomposition of $\lrc{\SG_4 / H}$ & Model \\
\hline
$\{e\}$ & 1 & 24 & $\{4\} \!\oplus\! 3\s{31}\!\oplus\! 2\s{2^2}\!\oplus\! 3\s{21^2}\!\oplus\! \{1^4\}$ & --\\
$\mathbb{Z}_2$ & 6 & 12 & $\{4\}\!\oplus\! 2\s{31}\!\oplus\! \s{2^2}\!\oplus\! \s{21^2}$ & GMM \\ 
\texttt{"} & 3 & \texttt{"} & $\{4\}\oplus \{31\}\oplus 2\{2^2\}\oplus \{21^2\}\oplus \{1\}$ & -- \\
$\mathbb{Z}_3$ & 4 & 8 & $\{4\}\!\oplus\! \s{31}\!\oplus\! \s{21^2}\!\oplus\! \{1^4\}$ & -- \\
$\mathbb{Z}_4$ & 3 & 6 & $\{4\}\!\oplus\! \s{2^2} \!\oplus\! \s{21^2}$ & -- \\
$\mathbb{Z}_2\times \mathbb{Z}_2$ & 3 & 6 & $\{4\}\!\oplus\! 2\s{2^2}\!\oplus\! \{1^4\}$  & --  \\
\texttt{"} & 1 & \texttt{"} & $\{4\}\!\oplus\! \s{31}\!\oplus\! \s{2^2}$ &  F81+K3ST \\
$\SG_3$ & 4 & 4 & $\{4\}\!\oplus\!\s{31}$ & F81 \\
$\mathbb{Z}_2 \wr \mathbb{Z}_2$ & 3 & 3 & $\{4\}\!\oplus\! \s{2^2}$ & K3ST \\
$A_4$ & 1 & 2 & $\{4\}\!\oplus\! \{1^4\}$ & -- \\
$\SG_4$ & 1 & 1 & $\{4\}$ & Jukes-Cantor 
\end{tabular}}
\caption{Decomposition of the orbits of $\SG_4$ into irreducible modules.}
\label{tab:S4models}
\end{table}

The final column in Table~\ref{tab:S4models} gives the name of the Lie Markov model that has the $\SG_4$ symmetry defined by $\SG_4/H$.
Presently we will show how these Lie Markov models arise by following the general scheme outlined in \S\ref{sec4}.

First we note that both the decomposition (\ref{eq:schurdecompS4}) of $\mathfrak{L}_{GMM}$ and the decomposition of each $\lrc{\SG_4/H}$ in Table~\ref{tab:S4models} have exactly one copy of the trivial representation, hence we observe that any Lie Markov model must occur as a \emph{single} orbit under the action of $\SG_4$ and as a consequence:
\begin{res}
In the four state case, there are no Lie Markov models with $\SG_4$ symmetry with dimension five, seven, nine, ten or eleven.
\end{res}


\subsubsection*{Dimension One}

From Table~\ref{tab:S4models} we see that there is only one abstract orbit of $\SG_4$ with cardinality one.
Thus, any orbit of cardinality one is isomorphic to $\SG_4/\SG_4$, with decomposition $\lrc{\SG_4/\SG_4}\cong \{4\}$.
As the general Markov model contains one copy of the trivial representation, we conclude:
\begin{res}
In the four state case, there is only a single one-dimensional Lie Markov model with $\SG_4$ symmetry.
\end{res}
It is immediate that the choice $\mathfrak{L}_{JC}:=\lrc{L_{id}}$ provides a stochastic basis and this model is nothing but the Jukes-Cantor model \citep{jukes1969} with a typical rate-matrix taking the form:
\beqn
Q=\alpha L_{id}=
\left(
\begin{array}{cccc}
	-3\alpha & \alpha & \alpha &\alpha  \\
	\alpha  & -3\alpha & \alpha & \alpha \\
	\alpha & \alpha & -3\alpha & \alpha \\
	\alpha & \alpha & \alpha & -3\alpha
\end{array}\right),\nonumber
\eqn
with associated graph given in Figure~\ref{fig:JCgraph}.
It is also worth noting that, because $\SG_n/\SG_n$ always has cardinality one, this result extends easily to the $n$-state case.

\begin{figure}
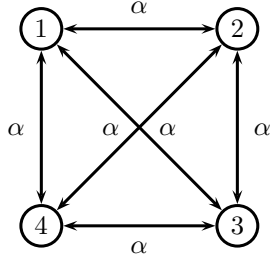

\centering
$
\psmatrix[colsep=2cm,rowsep=2cm,linewidth=.04cm,mnode=circle]
1&2\\
4&3
\ncline{<->}{1,1}{1,2}^\alpha
\ncline{<->}{2,1}{2,2}_\alpha
\ncline{<->}{1,1}{2,1}<\alpha
\ncline{<->}{1,2}{2,2}>\alpha
\ncline{<->}{1,1}{2,2}<\alpha
\ncline{<->}{1,2}{2,1}>\alpha
\endpsmatrix
$
\caption{Graphical representation of the Jukes-Cantor model.}
\label{fig:JCgraph}
\end{figure}

\subsubsection*{Dimension Two}

From Table~\ref{tab:S4models} we see that the orbit with cardinality two is $\SG_4/A_4$ with decomposition $\lrc{\SG_4/A_4}\cong \{4\}\oplus \{1^4\}$.
However the general Markov model does not contain a copy of the sign representation $\{1^4\}$, so we conclude immediately:
\begin{res}
In the four state case, there is no two-dimensional Lie Markov model with $\SG_4$ symmetry.
\end{res}

\subsubsection*{Dimension Three}

From Table~\ref{tab:S4models} we see that the only orbit with cardinality three is $\SG_4/{\mathbb{Z}_2\wr\mathbb{Z}_2}$, with decomposition $\lrc{\SG_4/{\mathbb{Z}_2\wr\mathbb{Z}_2}}\cong \{4\}\oplus \s{2^2}$.
As we saw in (\ref{eq:K3STdecomp}), this subspace is given by
\beqn
\lrc{L_{\alpha},L_{\beta},L_{\gamma}}\cong \{4\}\oplus \s{2^2},\nonumber
\eqn
with \emph{abelian} Lie algebra\footnote{We note here that the fact that the Lie algebra of the Kimura 3ST model is abelian was first explicitly discussed in \citet{bashford2004}.}
\beqn
\lie{L_\alpha}{L_\beta}=\lie{L_\alpha}{L_\gamma}=\lie{L_\beta}{L_\gamma}=0.\nonumber
\eqn
A  generic rate-matrix in this model looks like 
\begin{eqnarray*}
 Q=\left(
\begin{array}{cccc}
	* & \alpha & \beta & \gamma  \\
	\alpha  & * & \gamma & \beta \\
	\beta & \gamma & * & \alpha \\
	\gamma & \beta & \alpha & *
\end{array}\right)\tb{=\alpha L_\alpha+\beta L_\beta+\gamma L_\gamma,}
\end{eqnarray*}
where $*=-\alpha-\beta-\gamma$.
This is, of course, the Kimura 3ST model \citep{kimura1981} with associated graph given in Figure~\ref{fig:K3STgraph}.
\begin{res}
In the four state case, the only three-dimensional Lie Markov model with $\SG_4$ symmetry is the Kimura 3ST model.
\end{res}

\subsubsection*{Dimension Four}

From Table~\ref{tab:S4models} we see that the only orbit with cardinality four is $\SG_4/{\SG_3}$, with decomposition $\lrc{\SG_4/{\SG_3}}\cong \{4\}\oplus \s{31}$.
As we saw in (\ref{eq:Rdecomp}), this subspace is given by either
\beqn
\lrc{R_1,R_2,R_3,R_4}\cong \{4\}\oplus\{31\},\nonumber
\eqn
or
\beqn
\lrc{C_1,C_2,C_3,C_4}\cong \{4\}\oplus\{31\}.\nonumber
\eqn
By referring to (\ref{eq:algebrasS4}), we see that of these two possibilities only $\lrc{R_1,R_2,R_3,R_4}$ forms a Lie algebra: 
\beqn
\lie{R_i}{R_j}=R_i-R_j.\nonumber
\eqn
A generic rate-matrix in this model looks like 
\begin{eqnarray*}
 Q=\left(
\begin{array}{cccc}
	\ast & a & a & a  \\
	b  & \ast & b & b \\
	c & c & \ast & c \\
	d & d & d & \ast
\end{array}\right)\tb{=aR_1+bR_2+cR_3+dR_4,}
\end{eqnarray*}
where the diagonal entries are determined by the zero column-sum condition.
This is of course the Felsenstein 81 model \citep{felsenstein1981} with associated graph given in Figure~\ref{fig:F81graph}.
Thus we have:
\begin{res}
In the four state case, the only four-dimensional Lie Markov model with $\SG_4$ symmetry is the Felsenstein 81 model.
\end{res}

\begin{figure}
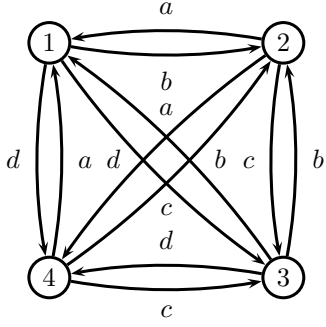

\centering
$
\psmatrix[colsep=2.5cm,rowsep=2.5cm,linewidth=.04cm,mnode=circle]
1&2\\
4&3
\ncarc[arcangle=10]{<-}{1,1}{1,2}^{a}
\ncarc[arcangle=-10]{->}{1,1}{1,2}_{b}
\ncarc[arcangle=10]{<-}{2,1}{2,2}^{d}
\ncarc[arcangle=-10]{->}{2,1}{2,2}_{c}
\ncarc[arcangle=10]{<-}{1,1}{2,1}>{a}
\ncarc[arcangle=-10]{->}{1,1}{2,1}<{d}
\ncarc[arcangle=10]{<-}{1,2}{2,2}>{b}
\ncarc[arcangle=-10]{->}{1,2}{2,2}<{c}
\ncarc[arcangle=10]{<-}{1,1}{2,2}^{a}
\ncarc[arcangle=-10]{->}{1,1}{2,2}_{c}
\ncarc[arcangle=10]{<-}{1,2}{2,1}>{b}
\ncarc[arcangle=-10]{->}{1,2}{2,1}<{d}
\endpsmatrix
$
\caption{Graphical representation of the Felsenstein 81 model.}
\label{fig:F81graph}
\end{figure}

\subsubsection*{Dimension Six}

At cardinality six, we have the orbit $\SG_4/{\mathbb{Z}_2\times \mathbb{Z}_2}$, but there are two non-isomorphic copies of $\mathbb{Z}_2\times \mathbb{Z}_2$ that we need to consider:
\beqn
\{e,(12),(34),(12)(34)\}\cong \{e,(13),(24),(13)(24)\}\cong \{e,(14),(23),(14)(23)\}&\cong {\mathbb{Z}_2\times \mathbb{Z}_2},\nonumber
\eqn
or
\beqn
\{e,(12)(34),(13)(24),(14)(23)\}&\cong {\mathbb{Z}_2\times \mathbb{Z}_2}\nonumber.
\eqn

In the first case, we have the decomposition \[\lrc{\SG_4/\{e,(12),(34),(12)(34)\}}\cong \{4\}\oplus 2\{21^2\}\oplus \{1^4\},\] and we see immediately that this cannot support a submodel because of the occurrence of the sign representation $\{1^4\}$ .
In the second case, we have the decomposition \[\lrc{\SG_4/\{e,(12)(34),(13)(24),(14)(23)\}}\cong \{4\}\oplus\s{31}\oplus\s{2^2},\] and we see that this orbit may support a submodel.
Referring to Result~\ref{eq:schurbasis}, we have 
\beqn
\lrc{L_\alpha,L_\beta,L_\gamma,R_1,R_2,R_3,R_4}\cong \lrc{L_\alpha,L_\beta,L_\gamma,C_1,C_2,C_3,C_4}\cong\{4\}\oplus\s{31}\oplus\s{2^2},\nonumber
\eqn
with the linear relations $L_\alpha+L_\beta+L_\gamma=R_1+R_2+R_3+R_4=C_1+C_2+C_3+C_4=L_{id}$.
However, referring to (\ref{eq:algebrasS4}), it is clear that $\lrc{L_\alpha,L_\beta,L_\gamma,C_1,C_2,C_3,C_4}$ does not form a Lie algebra and can thus be excluded.  
On the other hand, we see that $\lrc{L_\alpha,L_\beta,L_\gamma,R_1,R_2,R_3,R_4}$ forms a Lie algebra as we have the ``cross brackets'':
\beqn
\lie{L_{ij|kl}}{R_i}=R_{j}-R_i.\nonumber
\eqn
We refer to this model as $K3ST+F81$, where because of the linear relation we have 
\beqn
\dim\left(\lrc{L_\alpha,L_\beta,L_\gamma,R_1,R_2,R_3,R_4}\right)&=\dim\left(\lrc{L_\alpha,L_\beta,L_\gamma}\right)+\dim\left(\lrc{R_1,R_2,R_3,R_4}\right)-1\\
&=3+4-1=6.\nonumber
\eqn

The other possibility for dimension six is the orbit given by $\SG_4/\mathbb{Z}_4$, with decomposition $\lrc{\SG_4/\mathbb{Z}_4}\cong \{4\}\oplus \s{2^2}\oplus \s{21^2}$.
Again referring to Result~\ref{eq:schurbasis}, we have
\beqn
\lrc{\{P_{ij}\}_{i<j},L_\alpha,L_{\beta},L_\gamma}\cong \{4\}\oplus \s{2^2} \oplus\s{21^2}.\nonumber
\eqn
However this module does not form a Lie algebra, as, for example,
\beqn
\lie{K_\alpha}{P_{12}}=2 L_\alpha +2 L_\beta +2 L_\gamma-4 R_2-2 R_3-2 R_4+2 C_1-2 C_2.\nonumber
\eqn

\begin{res}
The only six-dimensional Lie Markov model with $\SG_4$ symmetry is the vector space sum of the Kimura 3ST and Felsenstein 81 models: $\lrc{L_\alpha,L_\beta,L_\gamma,R_1,R_2,R_3,R_4}$.
\end{res}

\tb{Constructing a basis for this six-dimensional model that exhibits the permutation symmetry is non-trivial, but can be achieved as follows.
Notice that the set of pairs:
\beqn
T:=\{\{1,2\},\{1,3\},\{1,4\},\{2,3\},\{2,4\},\{3,4\}\},\nonumber
\eqn
forms a set of cardinality six that is invariant under the permutations $\sigma\in \SG_4$.
Now consider the following (surjective) map between this set of pairs and the set of bipartions $S=\{12|34,13|24,14|23\}$: 
\beqn
\{1,2\}&\mapsto 12|34,\nonumber\\
\{1,3\}&\mapsto 13|24,\\
\{1,4\}&\mapsto 14|23,\\
\{2,3\}&\mapsto 14|23,\\
\{2,4\}&\mapsto 13|24,\\
\{3,4\}&\mapsto 12|34.\\
\eqn
The existence of this map motivates the construction of the tangent vectors
\beqn
W_{12}&:=L_\alpha+(R_1+R_2),\\
W_{13}&:=L_\beta+(R_1+R_3),\\
W_{14}&:=L_\gamma+(R_1+R_4),\\
W_{23}&:=L_\gamma+(R_2+R_3),\\
W_{24}&:=L_\beta+(R_2+R_4),\\
W_{34}&:=L_\alpha+(R_3+R_4),\nonumber
\eqn
and we see that we may then take $B_{F81+K3ST}=\left\{W_{12},W_{34},W_{13},W_{24},W_{14},W_{23}\right\}$ as a stochastic basis for $\LL_{F81+K3ST}$.}
\tb{A generic rate-matrix for this model has the form
\beqn
 Q:&=\alpha W_{12}+\bar{\alpha}W_{34}+\beta W_{13}+\bar{\beta}W_{24}+\gamma W_{14}+\bar{\gamma}W_{23}\\
 &=\left(
\begin{array}{cccc}
	\ast & 2\alpha+\bar{\alpha}+\beta+\gamma & 2\beta+\alpha+\bar{\beta}+\gamma & 2\gamma+\alpha+\beta+\bar{\gamma} \\
	2\alpha+\bar{\alpha}+\bar{\beta}+\bar{\gamma} & \ast & 2\bar{\gamma}+\alpha+\bar{\beta}+\gamma & 2\bar{\beta}+\alpha+\beta+\bar{\gamma} \\
	2\beta+\bar{\alpha}+\bar{\beta}+\bar{\gamma} & 2\bar{\gamma}+\bar{\alpha}+\beta+\gamma & \ast & 2\bar{\alpha}+\alpha+\beta+\bar{\gamma} \\
	2\gamma+\bar{\alpha}+\bar{\beta}+\bar{\gamma} & 2\bar{\beta}+\bar{\alpha}+\beta+\gamma & 2\bar{\alpha}+\alpha+\bar{\beta}+\gamma & \ast
\end{array}\right),
\nonumber
\eqn
where the diagonal entries are determined by the zero column-sum condition.
}

\tb{The Lie algebra of this model in the above basis can be found via explicit computation. 
If $ij|kl$ is a bipartion, then
\beqn
\lie{W_{ij}}{W_{kl}}=2\left(W_{ij}-W_{kl}\right).\nonumber
\eqn
Otherwise, if $ij|i'j'$ and $kl|k'l'$ are distinct bipartions, then
\beqn
\lie{W_{ij}}{W_{kl}}=2\left(W_{ij}-W_{i'j'}\right)-2\left(W_{kl}-W_{k'l'}\right).\nonumber
\eqn
}

\private{Jeremy: Here we need to discuss a suitable stochastic basis for this model. Is there an obvious one?}
\private{Jeremy (LATER): Ok, I found it.}

\subsubsection*{Dimension Eight}
 
From Table~\ref{tab:S4models} we see that the only $\SG_4$ orbit with cardinality eight is $\SG_4/{\mathbb{Z}_3}$, with decomposition $\lrc{\SG_4/{\mathbb{Z}_3}}\cong \{4\} \oplus \s{31} \oplus \s{21^2}\oplus \{1^4\}$.
Again, due to the occurrence of the sign representation $\{1^4\}$, we conclude that:
\begin{res}
There is no eight-dimensional Lie Markov model with $\SG_4$ symmetry.
\end{res}

Summarizing these results is the main outcome of this article:

\begin{thm}\label{thm:bigone}
On four character states, there are exactly five Lie Markov models with $\SG_4$ symmetry.
These models have dimension one, three, four, six and twelve, and Lie algebras as given in Table~\ref{tab:final}.
\end{thm}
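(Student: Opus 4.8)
The plan is to run the general scheme of \S\ref{sec5} with $n=4$ and $G=\SG_4$, taking as input the decomposition $\mathfrak{L}_{GMM}\cong\{4\}\oplus 2\{31\}\oplus\{2^2\}\oplus\{21^2\}$ together with the explicit irreducible bases recorded in Result~\ref{eq:schurbasis}. The first step is a structural reduction. If $\mathfrak{L}\subseteq\mathfrak{L}_{GMM}$ is a Lie Markov model with $\SG_4$ symmetry, then its symmetry-exhibiting basis is a disjoint union of $\SG_4$-orbits, and since that basis is linearly independent, $\mathfrak{L}$ is the internal direct sum of the corresponding permutation modules $\lrc{\SG_4/H}$. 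Each such module contains exactly one copy of the trivial representation $\{4\}$, whereas $\mathfrak{L}_{GMM}$ contains only one in total, so $\mathfrak{L}$ must consist of a \emph{single} orbit, of dimension $|\SG_4|/|H|\in\{1,2,3,4,6,8,12\}$ (the value $24$ being excluded since $\dim\mathfrak{L}_{GMM}=12$). In particular, dimensions $5,7,9,10,11$ are impossible.

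The second step screens the remaining dimensions against Table~\ref{tab:S4models} by representation theory alone: any orbit whose decomposition involves the sign representation $\{1^4\}$ (absent from $\mathfrak{L}_{GMM}$), or any irreducible with multiplicity exceeding that in $\mathfrak{L}_{GMM}$, cannot embed. This kills dimension $2$ (the orbit $\SG_4/A_4\cong\{4\}\oplus\{1^4\}$), dimension $8$ (the orbit $\SG_4/\mathbb{Z}_3$ involves $\{1^4\}$), one of the two conjugacy types of $\SG_4/(\mathbb{Z}_2\times\mathbb{Z}_2)$ in dimension $6$ (the one involving $\{1^4\}$), and the double-transposition type of $\SG_4/\mathbb{Z}_2$ in dimension $12$. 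What survives is: dimension $1$, the orbit $\SG_4/\SG_4$; dimension $3$, $\SG_4/(\mathbb{Z}_2\wr\mathbb{Z}_2)$; dimension $4$, $\SG_4/\SG_3$; in dimension $6$, the orbits $\SG_4/\mathbb{Z}_4$ and the surviving type of $\SG_4/(\mathbb{Z}_2\times\mathbb{Z}_2)$; and in dimension $12$, the transposition type of $\SG_4/\mathbb{Z}_2$, which is just $\mathfrak{L}_{GMM}$ itself.

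The third step identifies each survivor with a concrete subspace of $\mathfrak{L}_{GMM}$ via Result~\ref{eq:schurbasis}, tests it against the commutation relations (\ref{eq:algebrasS4}), and exhibits a stochastic basis. Dimensions $1$, $3$ and $12$ are $\lrc{L_{id}}$ (Jukes--Cantor), $\lrc{L_\alpha,L_\beta,L_\gamma}$ (Kimura 3ST, automatically abelian) and $\mathfrak{L}_{GMM}$, respectively, all trivially Lie algebras with obvious stochastic bases. The two computational cases are dimensions $4$ and $6$. In dimension $4$ the type $\{31\}$ is realised by both $\lrc{R_1,\dots,R_4}$ and $\lrc{C_1,\dots,C_4}$; by (\ref{eq:algebrasS4}) only the row-sum version closes, since $\lie{R_i}{R_j}=R_i-R_j$ while $\lie{C_i}{C_j}=R_j-R_i-P_{ij}$ escapes $\lrc{C_1,\dots,C_4}$, so the unique model is Felsenstein~81, with stochastic basis $\{R_1,\dots,R_4\}$. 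In dimension $6$, the orbit $\SG_4/\mathbb{Z}_4$ matches $\lrc{\{P_{ij}\}_{i<j},L_\alpha,L_\beta,L_\gamma}$, which fails to close (for instance $\lie{L_\alpha}{P_{12}}$ acquires $R_i$- and $C_i$-components), while $\SG_4/(\mathbb{Z}_2\times\mathbb{Z}_2)$ matches $\lrc{L_\alpha,L_\beta,L_\gamma,R_1,\dots,R_4}$, which does close by the cross-brackets $\lie{L_{ij|kl}}{R_i}=R_j-R_i$, and for which the set $\{W_{ij}\}$ constructed above is a stochastic basis. Collecting survivors yields exactly five models, of dimensions $1,3,4,6,12$, with Lie algebras as tabulated in Table~\ref{tab:final}. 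The main obstacle is precisely this last step: the explicit commutator bookkeeping that separates the row-sum from the column-sum realisation in dimensions $4$ and $6$ and rules out $\SG_4/\mathbb{Z}_4$. This is routine given (\ref{eq:algebrasS4}), but it is the one place where the classification genuinely needs computation beyond representation theory.
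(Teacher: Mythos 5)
Your proposal is correct and follows essentially the same route as the paper: the single-trivial-representation argument forcing a single orbit, the screening of orbits via the sign representation and multiplicities against the decomposition $\{4\}\oplus 2\{31\}\oplus\{2^2\}\oplus\{21^2\}$, and the final commutator checks that select $\lrc{R_1,\dots,R_4}$ over $\lrc{C_1,\dots,C_4}$ and reject the $\SG_4/\mathbb{Z}_4$ orbit. No gaps; this matches the paper's dimension-by-dimension argument in \S\ref{sec6}.
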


\begin{table}\label{tab:final}
\begin{tabular}{ccc}
\hline
{Model} & {Dimension} & {Lie algebra} \\
\hline
\vspace{.2em}
GMM & 12 & $\left[L_{ij},L_{kl}\right]=\left(L_{il}-L_{jl}\right)\left(\delta_{jk}-\delta_{jl}\right)-\left(L_{kj}-L_{lj}\right)\left(\delta_{il}-\delta_{jl}\right)$ \\
\vspace{.1em}
K3ST+F81 & 6 & $\left[L_{ij|kl},R_i\right]=R_j-R_i$\\
\vspace{.2em}
Felsenstein 81 & 4 & $\lie{R_i}{R_j}=R_i-R_j$ \\
\vspace{.2em}
Kimura 3ST & 3 & $\lie{L_\alpha}{L_\beta}=\lie{L_\alpha}{L_\gamma}=\lie{L_\beta}{L_\gamma}=0$\\
\vspace{.2em}
Jukes Cantor  & 1 & $\emptyset$
\label{tab:S4hierarchy}
\end{tabular}
\caption{The complete list of four state Lie Markov models with $\SG_4$ symmetry. \tb{Note that $L_{12|34}:=L_\alpha$, $L_{13|24}:=L_\beta$ and $L_{14|23}:=L_\gamma$}.}
\end{table}

\section{Discussion}\label{conc}
In this article we have discussed closure of continuous-time Markov chains, and we have shown that requiring that the rate-matrices drawn from a model form a Lie algebra provides a sufficient condition for closure.
In \S\ref{sec2} we showed that the GTR model (which is the basis for most current phylogenetic studies) does \emph{not} satisfy the closure condition and therefore is not a Lie Markov model (see Definition \ref{stochasticbasis}).
We showed that the general Markov model is a Lie Markov model and described its corresponding Lie algebra.
In \S\ref{sec3} we gave a complete description of the two-state Lie Markov models; showing that there are actually an infinite continuum of Lie Markov models in this case. 
In \S\ref{sec4} we gave a new charactization of the concept of the symmetry of Markov models, and we went on to use this characterization to assist in the search for Lie Markov models on four character states.

In \S\ref{sec5} we outlined a general scheme that produces the complete list of Lie Markov models with a given symmetry.
The main result of the article is then Theorem~\ref{thm:bigone}, which shows that, for four character states, there are exactly five Lie Markov models with maximal symmetry, $\SG_4$.
Four of these are well known models: the Jukes-Cantor, the Kimura 3ST, the Felsenstein 81 and the general Markov model, and the fifth can be interpreted as the merging of the Kimura 3ST and Felsenstein 81 models.

The immediate avenue for future work is to explore Lie Markov models with slightly relaxed symmetry. 
As stated at the end of \S\ref{sec5}, we have made successfully applied our general methods to four-state case with relaxed symmetries $G<\SG_4$ in the cases of $G=\mathbb{Z}_2\wr \mathbb{Z}_2$, $\mathbb{Z}_2\times \mathbb{Z}_2$ and $\mathbb{Z}_4$. 
The $G=\mathbb{Z}_2\wr \mathbb{Z}_2$ is of particular interest as the copy, 
\[\mathbb{Z}_2\wr \mathbb{Z}_2\cong \{e,(AG),(CT),(AG)(CT),(AC)(GT),(AT)(CG),(ACGT),(ATGC)\},\] 
is exactly the set of nucleotide permutations that preserves partitioning into purines and pyrimidines: $AG|CT:=\{\{A,G\},\{C,T\}\}$.
For example
\beqn
(AG)\cdot AG|CT&=GA|CT=AG|CT,\\\nonumber
(ACGT)\cdot AG|CT&=CT|GA=AG|CT.
\eqn
We defer presentation of the list Lie Markov models with these relaxed symmetries to a forthcoming publication.

Of course, if one were to demand that the symmetry of Lie Markov models was the trivial group $G\!=\!\{e\}$, this would amount to taking no particular symmetry at all.
At this point our methods break down and one is simply back at asking for all the sub-algebras of $\mathfrak{L}_{GMM}$ that have a stochastic basis.
Unfortunately, as was discussed at the start of \S~\ref{sec5}, the methods outlined in the paper cannot be used to address the question of Lie Markov models at this level of generality.

Our presentation of Lie Markov models for a given symmetry has very desirable properties in terms of model selection. 
For instance, the practicing biologist may wish that candidate models do not provide any natural groupings of nucleotides, and hence the $\SG_4$ symmetry is appropriate and our list of five models are the appropriate models and it is then a matter of choosing how many free parameters are appropriate for the given data set.
On the other hand, the biologist may wish to distinguish between purines and pyrimidines.
In this case, as discussed above, the (forthcoming) hierarchy corresponding to Lie Markov models with $\mathbb{Z}_2\wr \mathbb{Z}_2$ symmetry and ordered by number of free parameters would be most appropriate.

Another avenue of interesting theoretical research is to explore expanding the definition of a Lie Markov model arising from a Lie algebra $\mathfrak{L}$ from the set $e^\mathfrak{L}$, to the Lie group of transition matrices whose tangent space at an arbitrary point is the Lie algebra $\mathfrak{L}$ (whose individual members need not arise as the exponential of a rate-matrix).
Again powerful methods of Lie theory are relevant here and simple topological questions such as ``Is $e^\mathfrak{L}$ equal to the connected component to the identity?'' are most natural.
A complete analysis would seek to give an understanding of the geometric aspects of these Lie groups. 
For example, the points where the determinant of the transition matrices is equal to zero will form an algebraic variety that acts a boundary to the group.
This justifies our comment at the start of \S\ref{sec2} that the key to understanding Markov matrices is to first understand them as Lie groups first.
This would also go a long way in clarifying the connections between the discrete Markov chains (or ``algebraic models'' \citep{pachter2005}) and the continuous-time formulation provided by Lie Markov models.

Finally, we can see from the results in this article that the success of Hadamard approach to the Kimura 3ST model \citep{hendy1989}, and generalization via Fourier analysis to arbitrary (abelian) ``group-based'' models \citep{szekely1993}, results directly from the fact that the Lie algebras of these models are abelian.
In Lie theory language this means that any representation of the K3ST Lie algebra can be decomposed into one-dimensional irreducible representations, ie. it is fully ``diagonalizable''.
It is then interesting to try to understand the other Lie Markov models by applying techniques from Lie theory such as the sequence of derived sub-algebras and Cartan sub-algebras \citep{erdmann2006}.
This point of view has significant potential to generalize the ``thin flattenings'' of \citet{casanellas2010} applied to equivariant models, although the exact connections of these two points of view are not apparent to the authors at this stage.

\subsubsection*{Acknowledgements}
This research was conducted with support from the University of Tasmania Visiting Fellows Program.
The first author is funded by the Australian Research Council Discovery Project DP0770991, and the second author is partially supported by Ministerio de Educación y Ciencia  MTM2009-14163-C02-02, and Generalitat de Catalunya, 2009 SGR 1284 (Spain).

\bibliographystyle{jtbnew}
\bibliography{masterAB}

\end{document}